\documentclass[journal,twoside,web]{ieeecolor}
\usepackage{ntheorem}
\usepackage{generic}
\usepackage{cite} 
\usepackage{amsmath,amssymb,amsfonts} 
\usepackage{algorithmic}
\usepackage{graphicx}
\usepackage{algorithm}
\usepackage{textcomp}
\usepackage{float}

\usepackage{color}
\usepackage{fontspec} 
\usepackage{relsize}
\usepackage{hyperref} 
\def\BibTeX{{\rm B\kern-.05em{\sc i\kern-.025em b}\kern-.08em
    T\kern-.1667em\lower.7ex\hbox{E}\kern-.125emX}}
\markboth{\hskip25pc IEEE TRANSACTIONS AND JOURNALS TEMPLATE}
{Author \MakeLowercase{\textit{et al.}}: Title}
\renewcommand{\eqref}[1]{(\ref{#1})}  

\newtheorem{definition}{Definition}
\newtheorem{remark}{Remark}
\newtheorem{theorem}{Theorem}
\newtheorem{lemma}{Lemma}

\newtheorem{proposition}{Proposition}
\newtheorem{fact}{Fact}
\newtheorem{Corollary}{Corollary}
\newtheorem{example}{Example}

\newcommand{\calG}{\mathcal{G}}
\newcommand{\calV}{\mathcal{V}}
\newcommand{\vect}[1]{\boldsymbol{#1}}

\newcommand{\Vle}[1]{\mathcal{V}_{\leq {#1}}}
\newcommand{\Vl}[1]{\mathcal{V}_{< {#1}}}
\newcommand{\Vge}[1]{\mathcal{V}_{\geq {#1}}}
\newcommand{\Vg}[1]{\mathcal{V}_{> {#1}}}
\newcommand{\calA}{\mathcal{A}}
\newcommand{\calB}{\mathcal{B}}
\newcommand{\calC}{\mathcal{C}}
\newcommand{\Csocial}{C_{\text{social}}}
\newcommand{\Ccog}{C_{\text{cog}}}
\newcommand{\V}{\mathcal{V}}
\newcommand{\G}{\mathcal{G}}
\newcommand{\E}{\mathrm{Ex}\,}
\newcommand{\bO}{\mathbb{O}}
\newcommand{\I}{\mathcal{I}}

\begin{document}
\title{Pareto-Improvement-Driven Opinion Dynamics Explaining the Emergence of Pluralistic Ignorance}
\author{Yuheng Luo$^{1}$, Chuanzhe Zhang$^{1}$, Qingsong Liu$^{2}$, Hai Zhu$^{3}$ and Wenjun Mei$^{1}$
\thanks{This work was supported in part by the National Natural Science Foundation of China under Grants 72201008 and 72131001, the National Key R\&D Program of China under Grant 2022ZD0116401 and 2022ZD0116400, the Shuanghu Laboratory under Grant SH-2024JK31, Beijing Natural Science Foundation under Grant QY24049, and Peking University under The Fundamental Research Funds for the Central Universities. }
\thanks{An earlier conference version of this manuscript has been accepted in the Proceedings of the 64th \emph{IEEE Conference on Decision and Control}. Compared to the earlier conference version, this journal version contains substantially more results and deeper interpretations. Firstly, in the journal version, we provide detailed proofs for the almost-sure convergence, as well as the necessary and sufficient condition for almost-sure consensus, which are the two most important theorems in this paper. Secondly, all the theoretical results in Section III.C and Section III.D, on the ermegence of truth and initial-seeding, are not included in the conference paper. Thirdly, numerical studies and sociological interpretations are not presented in the conference version.}
\thanks{$^{1}$Yuheng Luo, Chuanzhe Zhang, and Wenjun Mei are with the Department of Control Science and Systems Engineering, Peking University, 100871 Beijing, China. $^{2}$Qingsong Liu is with the School of Artificial Intelligence and Automation, Wuhan University of Science and Technology, 430081 Wuhan, China. $^{3}$ Hai Zhu is with the Intelligent Game and Decision Laboratory, Beijing, China}
\thanks{The corresponding author is Wenjun Mei (mei@pku.edu.cn).}
}

\maketitle

\begin{abstract}

Opinion dynamics has recently been modeled from a game-theoretic perspective, where opinion updates are captured by individuals’ cost functions representing their motivations. Conventional formulations aggregate multiple motivations into a single objective, implicitly assuming that these motivations are interchangeable. This paper challenges that assumption and proposes an opinion dynamics model grounded in a multi-objective game framework. In the proposed model, each individual experiences two distinct costs: social pressure from disagreement with others and cognitive dissonance from deviation from the perceived truth. Opinion updates are modeled as Pareto improvements between these two costs. This framework provides a parsimonious explanation for the emergence of pluralistic ignorance, where individuals may agree on something untrue even though they all know the underlying truth. We analytically characterize the model, derive conditions for the emergence and prevalence of the truth, and propose an initial-seeding strategy that ensures consensus on truth. Numerical simulations are conducted on how network density and clustering affect the expression of truth. Both theoretical and numerical results lead to clear and non-trivial sociological insights. For example, no network structure guarantees truthful consensus if no one initially express the truth; moderately sparse but well-mixed networks best mitigate pluralistic ignorance. 

\end{abstract}

\begin{IEEEkeywords}
Opinion dynamics, Pluralistic ignorance, Multi-objective optimization, Consensus
\end{IEEEkeywords}

\section{introduction}

\subsection{Background and motivation} 
Opinion dynamics investigates the mechanisms through which interpersonal influence and social network structures jointly shape the formation of public opinions~\cite{XH-WH-XZ:11,AD-OA:11,P-AV-TR:17,P-AV-TR:18,UR-KG-DY-CF-HVE:19}. Traditionally, opinion evolution has been modeled either as rule-of-thumb dynamics or as games. In the latter case, social interactions are represented by players’ cost functions, reflecting their underlying motivations. When multiple motivations are involved, individuals are typically assumed to minimize an aggregated sum of different costs. However, it remain debatable whether such aggregation accurately captures human cognition. If different motivations in people’s minds are not interchangeable, opinion evolution in social groups may exhibit behaviors fundamentally different from those predicted by existing models. 

In this paper, we propose an opinion dynamics model based on multi-objective games. The model incorporates two distinct cost functions: social pressure arising from disagreement with others’ opinions and cognitive dissonance~\cite{HJ:19} caused by deviation from a perceived truth. Opinion updates are formulated as Pareto improvements~\cite{PV:14} between these competing objectives. Our analysis shows that this model offers a parsimonious explanation for pluralistic ignorance—the phenomenon where individuals collectively endorse a false opinion even though they all know the underlying truth. We establish theoretical results on the model’s dynamical properties, including the characterization of equilibria, convergence to equilibria, and conditions for convergence to the underlying truth. Numerical studies are also conducted to reveal how social network structures affect collective learning. 

\subsection{Literature review}
As one of the earliest formulations of opinion dynamics, the French–DeGroot model assumes that individuals’ opinions are represented by real numbers and updated through a weighted-averaging mechanism~\cite{FJ-JR:56,D-MH:74}. This model predicts global consensus as long as the influence network contains a strongly connected, globally reachable, and aperiodic component~\cite{P-AV-TR:17}. However, this prediction conflicts with empirical evidence showing that persistent disagreement often arises even in well-connected social networks~\cite{A-R:64,A-R:97}. To address this discrepancy, several important extensions have been proposed. The Friedkin–Johnsen (F–J) model~\cite{F-N-J:90} introduces persistent attachment to initial opinions; the Altafini model incorporates negative interpersonal influences~\cite{AC:12,AC-LG:14}; bounded-confidence models assume that individuals interact only with others whose opinions lie within a certain proximity~\cite{H-RK-U:02,DG-ND-AF-WG:00,BC-AC-PAV:24}; and biased-assimilation models capture the tendency to favor information consistent with existing beliefs~\cite{D-P-G-A-L-DT:13,XW-YM-LJ-CM-SX:20}. These models all fall within the rule-of-thumb class, as they directly prescribe behavioral rules without explicitly modeling the underlying motivations that drive them.

Over the past decade, researchers have increasingly adopted a game-theoretic perspective on opinion dynamics~\cite{B-K-G-S-M-K:13,GP-LJ-WF:14,BD-KJ-OS:15,ES-R-BT:15,NN-MU:16,MW-BF-CG-H-JM-DF:22,MW-H-JM-CG-BF-DF:24,GZ-VJ-CJ:22}. Groeber et al.\cite{GP-LJ-WF:14} established a unified micro-foundation based on cognitive dissonance minimization, encompassing the French–DeGroot model and its variants. Bindel et al.\cite{BD-KJ-OS:15} interpreted the F–J model as a noncooperative game and analyzed its price of anarchy. Mei et al.~\cite{MW-BF-CG-H-JM-DF:22,MW-H-JM-CG-BF-DF:24} pointed out that the quadratic social-pressure cost in DeGroot-like models leads to unrealistic implications and derived a weighted-median mechanism from an absolute-value formulation.

Despite these advances, all existing game-theoretic models implicitly assume that, when individuals pursue multiple motivations, they minimize an aggregated cost, i.e., a weighted sum of distinct objectives. This assumption leaves open fundamental questions: Are different motivations in people’s minds interchangeable? And is linear weighting an appropriate representation of multi-motivated decision-making? Evidence from economics and psychology suggests otherwise. Bounded rationality theory~\cite{simon1990bounded,jones1999bounded} posits that, rather than linearly combining multiple objectives, individuals evaluate them separately and rely on simple threshold-based heuristics to reduce cognitive load~\cite{SHA:55}. Mental accounting theory~\cite{TRH:99,henderson1992mental} further suggests that people categorize outcomes into distinct mental “accounts” with limited substitutability across categories. These insights motivate us to develop an opinion dynamics model grounded in multi-objective Pareto improvements~\cite{PV:14}, a classical concept in economics and optimization that characterizes adjustments improving at least one objective without worsening others.

\subsection{Statement of contributions}
The main contributions of this paper are as follows. 

We propose a novel opinion dynamics model grounded in multi-objective Pareto improvement. Unlike conventional formulations that aggregate multiple motivations into a single cost, our model treats them as non-exchangeable: Individuals adjust their opinions only when one cost decreases without increasing the other. To our knowledge, this is the first opinion-dynamics framework explicitly incorporating multi-objective optimization. We focus on the case where all individuals share the same perceived truth: a plausible assumption under pluralistic ignorance, the phenomenon where people privately reject but publicly conform to perceived norms~\cite{DGT:82,MDT-MC:87}. A classic example is the \emph{Emperor’s New Clothes}, where everyone knows that the emperor is naked but remains silent. Compared with dual-process models separating opinion expression and formation~\cite{HC-WT:14,TT-CCG:19,YM-ZL-M:21}, our approach offers a parsimonious mechanistic explanation for pluralistic ignorance.

We then conduct a rigorous theoretical analysis. We characterize the equilibrium set, prove almost-sure finite-time convergence, and derive a necessary and sufficient condition for amost-sure consensus. We establish graph-theoretic conditions for both the prevalence of truth, i.e., consensus on the truth when it is initially expressed, and the emergence of truth, where truthful opinions arise spontaneously. Building on these results, we provide a necessary and sufficient condition for the initial-seeding strategy to guarantees consensus on truth under arbitrary influence networks. The analysis highlights the central role of strictly cohesive sets in determining collective behavior and shows that no network structure ensures truth emergence if no one initially expresses it (analogous to the absence of the naïve child in \emph{The Emperor’s New Clothes}).

Thirdly, extensive numerical simulations performed on Erdős–Rényi and Watts–Strogatz small-world networks uncover clear and nontrivial relationships between network topology and the expression of truth: High network clustering fosters opinion fragmentation; denser networks promote consensus but can drive false consensus; and moderately sparse, well-mixed networks best mitigate pluralistic ignorance and promote truthful consensus.

\subsection{Organization}
The rest of this paper is organized as follows. Section II introduces some basic notions, as well as the model setup of the Pareto-Improvement-Driven (PID) opinion dynamics. Section III presents all the theoretical analysis and proofs of the main results. Section IV presents the simulation results on different complex networks. Section V is the conclusion.

\section{notations and model setup}

Let $\subseteq$ and $\subset$ be the symbols for subset and proper subset, respectively. Let $\mathbb{N}$ denote the set of non-negative integers. 

Consider a group of $n$ individuals discussing a certain issue. Their interpersonal influences are characterized by a row-stochastic influence matrix $W = (w_{ij})_{n×n}$. Denote by $\mathcal{G} (W)$ the directed and weighted graph associated with the adjacency matrix $W$, referred to as the influence network, where each node is an individual. In this paper, we use the terms “network” and “graph,” “node” and “individual,” and “link” and “edge” interchangeably. Let $\mathcal{V}=\{1,2,\dots，n\}$ be the index set of the nodes, and $w_{ij}>0$ means that there is an edge from node $i$ to node $j$, i.e., individual $i$ is influenced by individual $j$. Denote by 
$\mathcal{N}_i=\{j\in \mathcal{V}|w_{ij}\neq 0\}$ the set of node $i$'s out-neighbors, which includes node $i$ itself if $w_{ii} \neq 0$.

In this paper, we assume that individuals’ opinions take values from a finite set of integers $\bO=\{k,k+1,\dots,k+s\}$, representing options in multiple-choice issues. In particular, we assume that there exists a special opinion $\theta\in \bO$ that each individual privately perceives as the underlying truth. One might then wonder why the group does not simply reach consensus on $\theta$ if everyone recognizes it as true. The reason is that ``$\theta$ is the truth’’ constitutes private information: Individuals are unaware that others share the same belief and only observe others' expressed opinions.

Let $x\in \bO^n$ be the vector of everyone's expressed opinion. Denote by $x_i$ the $i$-th component of $x$, i.e., node $i$'s expressed opinion. Since we only model the dynamics of expressed opinions, the term ``expressed'' is usually omitted for simplicity. Define the sublevel set of $z$ as
\begin{align*}
   \Vle{z}(x)=\{ j\in \mathcal{V}\,|\,x_j\le z\} 
\end{align*}
and the superlevel set $\Vge{z}(x)\!=\!\{ j\in \mathcal{V}\,|\,x_j\!\ge\! z\}$. The notations $\Vl{z}(x)$, $\Vg{z}(x)$, and $\calV_{=z}(x)$ are defined analogously. 

Each individual aims to minimize two types of costs. The first one is the social pressure one experiences from expressing an opinion different from others', called the \emph{social cost}. For any individual $i$, such a cost is given by
\begin{align}\label{eq:social}
    \Csocial^i(x_i;x,W)=\sum_{j=1}^n w_{ij}|x_i-x_j|.
\end{align}
In the rest of this paper, without causing any confusion, we omit $W$ from the notation $\Csocial^i(x_i;x,W)$. Note that the widely-studied DeGroot-like models~\cite{D-MH:74} adopt a quadratic form of the above cost, i.e., $\sum_{j=1}^n w_{ij}(x_i-x_j)^2$. However, as pointed out by Mei et al.~\cite{MW-BF-CG-H-JM-DF:22}, quadratic cost functions lead to the non-negligibly unrealistic implication that opinion attractiveness increases with opinion distance. Therefore, in this paper, we adopt the absolute-value form as in~\cite{MW-BF-CG-H-JM-DF:22}. The second type of cost is the cognitive dissonance caused by deviating from one's perceived truth, called the \emph{cognitive cost}. For any individual $i$, such a cost is simply given by
\begin{align}\label{eq:cog}
    \Ccog^i(x_i;\theta)  =|x_i-\theta|.
\end{align}
We assume that individuals update their opinions asynchronously. When an individual is activated, they choose one option from the finite set $\bO$ that constitutes a Pareto improvement of the above two costs, specified as follows.
\begin{definition}[PID opinion dynamics]\label{def:PID-op-dyn}
Consider a group of $n$ nodes embedded in an influence network $\calG(W)$. Assume that individuals take opinions from a finite set $\bO$ and $\theta\in \bO$ is the truth perceived by every individual. Denote by a vector $x(t)\in \bO^n$ the individuals' opinions at time $t$. The Pareto-Improvement-Driven (PID) opinion dynamics is defined as the following discrete-time stochastic process: At each time $t+1$, a node $i$ is uniformly randomly picked from $\mathcal V$ and randomly chooses an opinion as its new opinion from the set
\begin{align*}
    P_i(x(t))=\Big{\{} z \in \bO\,\Big|\, & \Csocial^i(z;x(t)) \leq \Csocial^i(x_i(t);x(t)),\\
    & \Ccog^i(z;\theta) \leq \Ccog^i(x_i(t);\theta) \Big{\}},
\end{align*}
i.e., the set of all the Pareto improvements between the costs $\Csocial^i$ and $\Ccog^i$. An update sequence is called ``legal'' if it satisfies the PID opinion dynamics.
\end{definition}
Note that our definition of Pareto improvement does not require a strict improvement of one of the two costs, which is slightly different than the conventional setup~\cite{PV:14}. 

To provide an intuitive illustration of the PID opinion dynamics, we simulate the model on a lattice graph. As shown in Fig.~\ref{fig:lat}, the system exhibits rich dynamical behavior. Depending on the initial condition and the sequence of node updates, the population may reach a consensus on the truth $\theta$, a consensus on a false opinion distinct from $\theta$, or even settle on a non-consensus equilibrium. These observations raise several fundamental questions about the nature of opinion evolution under the PID mechanism. In particular, we seek to understand:
\begin{enumerate}
    \item  Do PID opinion dynamics converge for any initial condition? 
    \item What are the graph-theoretic conditions for convergence to consensus? 
    \item If at least one individual initially expresses the truth $\theta$, what graph-theoretic conditions guarantee the prevalence of truth to the entire network?
    \item How to break pluralistic ignorance? That is, if no individual initially express the truth $\theta$, under what conditions will the group eventually reach consensus on the truth? 
\end{enumerate}
These questions motivate the theoretical analysis presented in the next section.

\begin{figure*}[t]  
    \centering
    \includegraphics[width=1.0\linewidth]{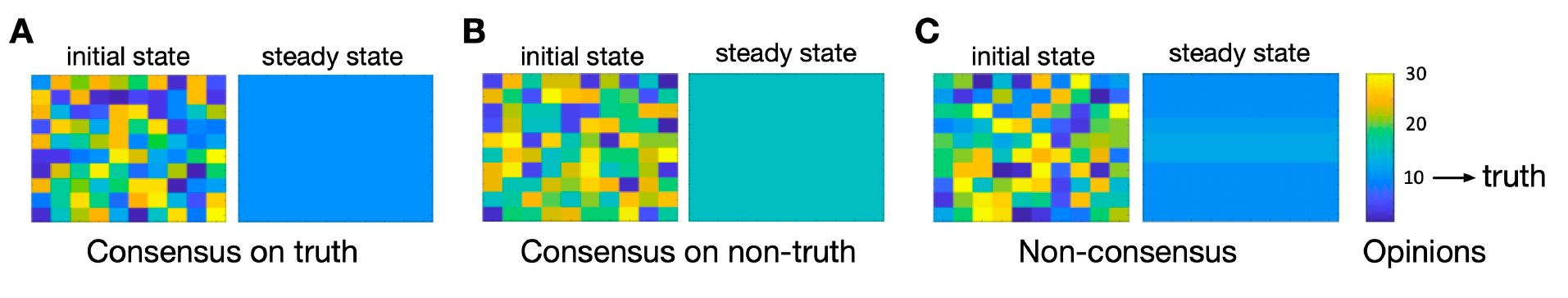}  
    \caption{Simulations of the PID opinion dynamics on a lattice. The system comprises a 10×10 lattice network (100 nodes arranged uniformly in a grid), where each node connects bidirectionally via directed edges to its von Neumann neighbors (up, down, left, right). Each node’s outgoing edges are assigned equal weights normalized to sum to 1. We set $\{1,2,...,30\}$ to be the set of available opinions and set $\theta=10$. The opinion each color represents and the color of the truth are presented on the right. The left picture of each panel is the initial condition and the right is the equilibrium achieved. Panels~\textbf{A},~\textbf{B}, and~\textbf{C} illustrate consensus on truth, consensus on non-truth, and non-consensus equilibrium respectively. }
    \label{fig:lat}
\end{figure*} 

\section{theoretical analysis of the PID opinion dynamics}

\subsection{Basic properties: equilibria and convergence}

 We first introduce a fine structure of influence networks that plays an important role in our analysis.

\begin{definition}[(Strictly) cohesive set]\label{def:scs}
Given a graph $\calG(W)$ with the node set $\mathcal{V}$, a cohesive set $\mathcal{M} \subseteq \mathcal{V}$ is a subset of nodes that satisfies $\sum _{j \in \mathcal{M}}w_{ij}\geq \frac{1}{2}$ for any $i\in \mathcal{M}$. If `$\geq$' is replaced by `$>$', then $\mathcal{M}$ is referred to as a strictly cohesive set. An empty set is considered strictly cohesive. 
\end{definition}

The definition of a strictly cohesive set naturally leads to an operation called the \emph{cohesive expansion}, formalized below. This concept will be employed in subsequent proofs.
\begin{definition}[Cohesive expansion]\label{def:CE}
Given an influence network $\cal{G}(W)$ with the node set $\cal{V}$ and a subset of nodes $\mathcal{M}\subseteq \mathcal{V}$, a subset of $\mathcal{V}$ is a cohesive expansion of $\mathcal{M}$ (written simply as $\E(\mathcal{M})$) if it can be constructed via the following iteration algorithm:
\begin{enumerate}[label=\arabic*)]
\item Let $\mathcal{M}_0=\mathcal{M}$;
\item For $k=0,1,2,...$, if there exists $i\in \mathcal{V} \setminus \mathcal{M}_k$ such that $\sum_{j\in \mathcal{M}_k}w_{ij}\geq\frac{1}{2}$, then let $\mathcal{M}_{k+1}=\mathcal{M}_k\cup \{i\}$;
\item Terminate the iteration at step $k$ as long as there does not exists any $i\in \mathcal{V}\setminus \mathcal{M}_k$ satisfying $\sum_{j\in \mathcal{M}_k}w_{ij}\geq \frac{1}{2}$. let $\E(\mathcal{M})=\mathcal{M}_k$.
\end{enumerate}
\end{definition}

The following fact states that the cohesive expansion of any node set is unique and thus well-defined. Its proof follows the same reasoning as Lemma~3 in~\cite{MW-H-JM-CG-BF-DF:24}. Note that our definition differs slightly from that in~\cite{MW-H-JM-CG-BF-DF:24}, as we use $\sum_{j\in \mathcal{M}_k} w_{ij} \ge 1/2$ instead of $> 1/2$ as the criterion for adding node $i$ to $\mathcal{M}_k$. The proof of uniqueness is not affected by this minor difference and is thus omitted.

\begin{fact}[Uniqueness of cohesive expansion]\label{fac:uq-ce}
Given an influence network $\calG(W)$ with the node set $\cal{V}$, for any $\mathcal{M}\subseteq \mathcal{V}$, the cohesive expansion of $\mathcal{M}$ is unique, i.e., independent of the order of node additions.
\end{fact}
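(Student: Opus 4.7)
The plan is to establish uniqueness by a standard confluence argument based on the monotonicity of the incoming-weight sum with respect to set inclusion. Concretely, I would show that if two legal executions of the iteration in Definition~\ref{def:CE} produce terminal sets $\mathrm{Ex}_1(\mathcal{M})$ and $\mathrm{Ex}_2(\mathcal{M})$, then $\mathrm{Ex}_1(\mathcal{M})\subseteq \mathrm{Ex}_2(\mathcal{M})$, whence equality follows by symmetry.

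First, I would record the single structural observation that drives everything: since $w_{ij}\ge 0$, the map $S\mapsto \sum_{j\in S}w_{ij}$ is monotone in $S$, i.e., $S\subseteq S'$ implies $\sum_{j\in S}w_{ij}\le \sum_{j\in S'}w_{ij}$. Second, I would exploit the termination condition in step~3) of Definition~\ref{def:CE}: if $\mathrm{Ex}_2(\mathcal{M})$ is the output of the second execution, then
\begin{equation*}
\sum_{j\in \mathrm{Ex}_2(\mathcal{M})}w_{ij}<\tfrac{1}{2}\quad \text{for every }i\in \calV\setminus \mathrm{Ex}_2(\mathcal{M}).
\end{equation*}

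Third, list the nodes appended in the first execution in order as $i_1,i_2,\dots,i_p$, so that $\mathrm{Ex}_1(\mathcal{M})=\mathcal{M}\cup\{i_1,\dots,i_p\}$, and prove by induction on $k$ that $\{i_1,\dots,i_k\}\subseteq \mathrm{Ex}_2(\mathcal{M})$. For the inductive step, the addition rule applied when $i_k$ was appended in the first execution gives $\sum_{j\in \mathcal{M}\cup\{i_1,\dots,i_{k-1}\}}w_{i_k j}\ge \tfrac{1}{2}$. The induction hypothesis yields $\mathcal{M}\cup\{i_1,\dots,i_{k-1}\}\subseteq \mathrm{Ex}_2(\mathcal{M})$, and monotonicity then gives $\sum_{j\in \mathrm{Ex}_2(\mathcal{M})}w_{i_k j}\ge \tfrac{1}{2}$. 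If $i_k\notin \mathrm{Ex}_2(\mathcal{M})$, this contradicts the termination condition recalled above, so $i_k\in \mathrm{Ex}_2(\mathcal{M})$, closing the induction. Swapping the roles of the two executions yields the reverse inclusion and hence uniqueness.

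I do not expect a serious obstacle here: the argument is a textbook closure/confluence proof, and the switch from the strict inequality ``$>\tfrac{1}{2}$'' used in~\cite{MW-H-JM-CG-BF-DF:24} to the non-strict ``$\ge \tfrac{1}{2}$'' used in Definition~\ref{def:CE} affects neither the monotonicity observation nor the contradiction at termination, since exactly the same inequality appears on both sides of the comparison. The only point worth being careful about is that the iteration is guaranteed to terminate in finitely many steps (because $|\calV|<\infty$ and each step strictly enlarges $\mathcal{M}_k$), so the induction on $k$ is well-founded and the enumeration $i_1,\dots,i_p$ of the appended nodes is well-defined.
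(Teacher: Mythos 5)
Your proof is correct and is exactly the standard monotonicity/confluence argument that the paper invokes by deferring to Lemma~3 of~\cite{MW-H-JM-CG-BF-DF:24} (the paper omits the details, noting only that the change from ``$>$'' to ``$\ge$'' is immaterial, which you also correctly observe). Nothing further is needed.
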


The following theorem characterizes the equilibria of the PID opinion dynamics in terms of strictly cohesive sets. 

\begin{theorem}[Set of equilibria]\label{th:equi}
Consider an influence network $\mathcal{G}(W)$ with n nodes. An opinion vector $x^* \in \bO^n$ is an equilibrium of the PID opinion dynamics given by Definition~\ref{def:PID-op-dyn}, i.e., $x^*$ satisfies $P_i(x^*)=\{x_i^*\}$ for any $i \in \mathcal{V}$, if and only if one of the two statements hold:
\begin{enumerate}[label=\arabic*)] 
  \item $x^*$ is a consensus state, i.e., $x_1^*=x_2^*=\cdots=x_n^*$.
  \item For any $ z\in \bO$ and $z<\theta$, $\mathcal{V}_{\leq z}(x^*)$ is a strictly cohesive set in $\calG(W)$; for any $z\in \bO$ and $z>\theta$, $\mathcal{V}_{\geq z}(x^*)$ is a strictly cohesive set.
    \end{enumerate}
\end{theorem}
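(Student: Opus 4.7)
The plan is to reduce the global equilibrium condition $P_i(x^*)=\{x_i^*\}$ for every $i$ to a purely node-wise criterion on neighborhood weights, and then to reassemble these node-wise conditions into the strict cohesiveness of the sub- and super-level sets asserted in condition~2. A first observation I would record is that any node with $x_i^*=\theta$ is automatically ``locked in'': since $\Ccog^i(z;\theta)=|z-\theta|>0=\Ccog^i(x_i^*;\theta)$ for every $z\ne\theta$, no alternative can be a Pareto improvement regardless of the social cost. The nontrivial constraints therefore come only from nodes with $x_i^*\ne\theta$, and by symmetry I would treat only the case $x_i^*<\theta$, the case $x_i^*>\theta$ being analogous with super-level sets in place of sub-level sets.

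For $x_i^*<\theta$, the cognitive sub-level set at $x_i^*$ is exactly $\{z\in\bO:x_i^*\le z\le 2\theta-x_i^*\}$, so $P_i(x^*)=\{x_i^*\}$ iff the piecewise-linear social cost $f(z):=\sum_j w_{ij}|z-x_j^*|$ satisfies $f(z)>f(x_i^*)$ for every integer $z$ in this range with $z>x_i^*$. A direct computation on integer increments yields
\begin{align*}
    f(x_i^*+1)-f(x_i^*) \;=\; 2\sum_{j\,:\,x_j^*\le x_i^*} w_{ij} \;-\; 1,
\end{align*}
and the same computation shows that the forward difference $f(z+1)-f(z)$ is nondecreasing in $z$, so $f$ is convex on the integer lattice. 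Convexity then propagates any strict sign at $z=x_i^*+1$ to every $z>x_i^*$, including the antipodal point $z=2\theta-x_i^*$ which shares the same cognitive cost as $x_i^*$ and therefore cannot be excluded by the cognitive criterion alone. Hence the local equilibrium condition at $i$ is equivalent to $\sum_{j\in\Vle{x_i^*}(x^*)}w_{ij}>1/2$, i.e., the strict-cohesiveness inequality for the sub-level set $\Vle{x_i^*}(x^*)$ holds at node $i$.

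It remains to upgrade these node-wise inequalities to the level-set statement in condition~2. For the ($\Rightarrow$) direction, given any $z\in\bO$ with $z<\theta$ and any $i\in\Vle{z}(x^*)$, one has $x_i^*\le z<\theta$, so the local condition gives the strict inequality on $\Vle{x_i^*}(x^*)$, and the monotone inclusion $\Vle{x_i^*}(x^*)\subseteq\Vle{z}(x^*)$ propagates it to $\Vle{z}(x^*)$; empty sub-level sets are strictly cohesive by convention. The ($\Leftarrow$) direction follows by specializing $z=x_i^*$, and the case $z>\theta$ is symmetric on super-level sets. Finally I would verify that condition~1 is subsumed by condition~2: for a consensus state $x^*=c\mathbf{1}$, every relevant sub- or super-level set is either empty or all of $\calV$, both of which are strictly cohesive, so the ``condition~1 or condition~2'' form stated in the theorem is indeed logically equivalent to the single condition~2. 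The main delicacy throughout the argument is the antipodal point $z=2\theta-x_i^*$; handling it without separately evaluating $f$ there is what forces the proof to rely on convexity of $f$ on integers rather than only on the local derivative at $x_i^*$.
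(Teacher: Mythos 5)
Your proof is correct, and it reaches the theorem by a somewhat different (and arguably cleaner) route than the paper. The core computation is the same in both arguments: the social cost $z\mapsto\sum_j w_{ij}|z-x_j^*|$ is piecewise linear with forward difference $2\sum_{j:\,x_j^*\le z}w_{ij}-1$, which is nondecreasing in $z$, so strict positivity of the slope at $x_i^*$ propagates to the whole half-line above $x_i^*$ (this is exactly how the paper handles what you call the antipodal point $2\theta-x_i^*$). Where you differ is in the organization: you first prove a self-contained node-wise equivalence, namely that $P_i(x^*)=\{x_i^*\}$ iff $\sum_{j\in\Vle{x_i^*}(x^*)}w_{ij}>1/2$ for $x_i^*<\theta$ (and symmetrically above $\theta$), and then reassemble these into the level-set statement via the monotone inclusion $\Vle{x_i^*}(x^*)\subseteq\Vle{z}(x^*)$ in one direction and the specialization $z=x_i^*$ in the other. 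The paper instead proves the two implications globally, with the ``only if'' direction argued by contrapositive: it locates a node $j$ in a non-strictly-cohesive sublevel set and exhibits the explicit Pareto-improving move to $\mu=\min\{y\in\bO\mid y>x_j^*\}$, which is the same witness as your $x_i^*+1$. Your version buys a slightly sharper statement as a byproduct: the equilibrium condition is genuinely node-local, and condition~1 of the theorem is logically redundant given the conventions that the empty set is strictly cohesive and that $\calV$ itself always is (row-stochasticity), so the disjunction collapses to condition~2 alone. The one step you leave implicit but should state is that $x_i^*+1\in\bO$ and lies in the cognitive-feasible range whenever $x_i^*<\theta$; this follows because $\bO$ consists of consecutive integers containing both $x_i^*$ and $\theta$, and $x_i^*+1\le\theta<2\theta-x_i^*$, so the witness used in the ``only if'' half of your local equivalence is always available.
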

\begin{proof}
We first prove the ``if'' part. If Statement 1) holds, we have $\Csocial^i(x^*_i;x^*)=0$ for any $ i \in \mathcal{V}$. As a result, any deviation from $x_i^*$ will increase node $i$'s social cost. In this case, $P_i(x^*) = \{x_i^*\}$ for any $i\in \V$ and $x^*$ is thus an equilibrium. If Statement 2) holds and, without loss of generality, suppose $\calV_{<\theta}(x^*)$ is non-empty. We first prove that, for any given $i\in \Vl{\theta}(x^*)$, individual $i$ will not update their opinion to any value larger than $x_i^*$. According to Statement~2), $\Vle{x_i^*}(x^*)$ is a strictly cohesive set, i.e., 
\begin{align}\label{eq:proof-thm-equi-strict-cohesive}
    \sum_{j:\, x_j^*\le x_i^*} w_{ij} - \sum_{j:\, x_j^*>x_i^*} w_{ij}>0.
\end{align}
For any $z\in \mathbb{R}$ and $z>x_i^*$, 
\begin{align*}
    \Csocial^i(z;x^*) & = \sum_{j:\, x_j^*\le z} w_{ij}(z-x_j^*) + \sum_{j:\, x_j^*>z} w_{ij}(x_j^*-z)\\
    & = \Big( \sum_{j:\, x_j^*\le z} w_{ij} - \sum_{j:\, x_j^*>z} w_{ij} \Big) z \\
    & \quad + \sum_{j:\, x_j^*>z} w_{ij} x_j^* - \sum_{j:\, x_j^*\le z} w_{ij} x_j^*.
\end{align*}
According to~\eqref{eq:proof-thm-equi-strict-cohesive}, we have
\begin{align*}
    \sum_{j:\, x_j^*\le z} w_{ij} - \sum_{j:\, x_j^*>z} w_{ij} \ge \sum_{j:\, x_j^*\le x_i^*} w_{ij} - \sum_{j:\, x_j^*>x_i^*} w_{ij}>0. 
\end{align*}
Therefore, for any $z\in (x_i^*,+\infty)$, $\Csocial^i(z;x^*)$ is a continuous piece-wise affine function with strictly positive slopes, which implies that $\Csocial^i(z;x^*)$ strictly increases with $z$ in the interval $(x_i^*,+\infty)$. Therefore, for any $z\in \bO$ and $z>x_i^*$,  we have $\Csocial^i(z;x^*)>\Csocial^i(x_i^*;x^*)$ and thereby $z\notin P_i(x^*)$. Now consider any $z\in \bO$ and $z<x_i^*$. Since $z<x_i^*<\theta$, we have
\begin{align*}
    \Ccog^i(z;\theta) = |z-\theta| > |x_i^*-\theta| = \Ccog^i(x_i^*;\theta).
\end{align*}
That is, updating from $x_i^*$ to any $z\in \bO$ with $z<x_i^*$ will increase $i$'s cognitive cost $\Ccog^i$. Therefore, we have
\begin{align*}
    P_i(x^*) = \{x_i^*\}.
\end{align*}
Following the same line of argument, one could also prove that, for any $i\in \Vg{\theta}(x^*)$, $P_i(x^*)=\{x_i^*\}$. In addition, for any $i$ such that $x_i^*=\theta$, any deviation from $\theta$ increases the cognitive cost $\Ccog^i$ and thereby $P_i(x^*)=\{x_i^*\}$. This concludes the proof for the ``if'' part.

Now we proceed to the proof for the ``only if'' part. Suppose that an opinion vector $x^*\in \bO^n$ satisfies neither Statement~1) nor Statement~2). Without loss of generality, assume that there exists $i\in \V$ such that $x_i^*<\theta$ and $\Vle{x_i^*}(x^*)$ is not strictly cohesive. In this sense, there exists $j\in \Vle{x_i^*}(x^*)$ such that
\begin{align*}
    \sum_{k:\, x_k^*>x_i^*} w_{jk} \ge \frac{1}{2}.
\end{align*}
Let $\mu = \min \{y\in \bO\,|\, y>x_j^*\}$. Since $x_i^*<\theta$, such $\mu$ must exist and $\mu\le \theta$. For any $z\in (x_j^*, \mu)$,
\begin{align*}
    \Csocial^j(z;x^*) & = \sum_{k:\, x_k^*\le x_j^*} w_{jk}(z-x_k^*) + \sum_{k:\, x_k^*>x_j^*} w_{jk}(x_k^*-z)\\ 
    & = \Big( \sum_{k:\, x_k^*\le x_j^*} w_{jk} - \sum_{k:\, x_k^*>x_j^*} w_{jk} \Big)z \\
    &\quad + \sum_{k:\, x_k^*>x_j^*} w_{jk}x_k^* - \sum_{k:\, x_k^*\le x_j^*} w_{jk}x_k^*,
\end{align*}
which is an affine function of $z$ with a non-positive slope. Due to the continuity of $\Csocial^j(z;x^*)$ in $z$, we have $\Csocial^j(\mu;x^*)\le \Csocial^j(x_j^*;x^*)$. Moreover, since $x_j^*<\mu\le \theta$, we have $\Ccog^j(\mu;\theta)<\Ccog^j(x_j^*;\theta)$. Therefore, $\mu\in P_j(x^*)$ and thereby $x^*$ is not an equilibrium. This concludes the proof for the ``only if'' part. 
\end{proof}

As stated in Theorem~\ref{th:equi}, the equilibria of the PID opinion dynamics fall into three categories, see examples in Fig.~\ref{fig:lat}. The first corresponds to consensus on truth, i.e., $x^* = \theta\vect{1}_n$. The second category is dissensus, where there exist $i, j \in \calV$ such that $x_i^* \neq x_j^*$. The third and most intriguing category is false consensus, i.e., consensus at some $z \neq \theta$. In this case, although every individual privately recognizes the true value $\theta$, the group collectively “agrees” on an untrue opinion due to social pressure. The existence of such equilibria provides a parsimonious and mechanistic explanation for the well-known psychological phenomenon of pluralistic ignorance~\cite{DGT:82,MDT-MC:87}. As explained by our model, this phenomenon arises naturally from the non-exchangeability of individuals’ multiple motivations. Notably, compared with existing theories, the PID opinion dynamics focus solely on the evolution of expressed opinions driven by two simple motivational forces, making the model both minimal and interpretable. The following example illustrates a simple case in which the system converges to a state of pluralistic ignorance.
\begin{example}
Consider a group of four individuals, with the influence matrix given by
\begin{align*}
    W = \begin{bmatrix}
1/5 & 1/5 & 2/5 & 1/5\\
1/5 & 1/5 & 2/5 & 1/5\\
1/4 & 1/4 & 1/4 & 1/4\\
1/4 & 1/4 & 1/4 & 1/4\\
\end{bmatrix}.
\end{align*}
Suppose the set of opinions is $\bO=\{-1,0,1,2\}$ and the underlying truth is $\theta=0$. Let the initial opinions be $x(0)=(-1,-1,1,2)^{\top}$. The following is a legal update sequence according to the PID opinion dynamics given by Definition~\ref{def:PID-op-dyn}:
\begin{enumerate}
    \item node 1 updates to state 1 from -1 at time 1;
    \item node 2 updates to state 1 from -1 at time 2;
    \item node 4 updates to state 1  from 2 at time 3.
\end{enumerate}
Along such a sequence, the system achieves a false consensus on $1\neq \theta$ at time 3.
\end{example}

Now we proceed to the convergence analysis of the PID opinion dynamics. We first present a useful lemma. Its proof is based on the properties of absorbing states of Markov chains, see Lemma~7 in~\cite{MW-H-JM-CG-BF-DF:24}.
\begin{lemma} \label{lem:tf-rd-sd}
Consider the PID opinion dynamics given by Definition 1. If, starting from any $x\in \bO^n$, there exists a legal and finite update sequence $(i_1,z_1),\cdots,(i_{T},z_{T})$, along which the opinion trajectory reaches an equilibrium at time step $T$, then the PID opinion dynamics almost surely converges to an equilibrium in finite time, for any initial condition $x(0) \in \bO^n$. (Here $(i_t,z_t) $ means that node $i_t$ updates to opinion $z_t$ at time $t$)
\end{lemma}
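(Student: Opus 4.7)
The plan is to cast the PID opinion dynamics as a time-homogeneous Markov chain on the finite state space $\bO^n$, for which the equilibria are precisely the absorbing states, and then to leverage the hypothesis of the lemma to show that from every state the chain reaches the absorbing set within a uniformly bounded number of steps with strictly positive probability. Almost-sure finite-time convergence will then follow from a standard geometric-tail bound, in the same spirit as Lemma~7 of~\cite{MW-H-JM-CG-BF-DF:24}.

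First, I would make the one-step transition probabilities explicit. At time $t+1$, node $i$ is selected with probability $1/n$, and conditioned on this selection the new opinion is drawn from $P_i(x(t))$ with positive probability on every element. Since $|P_i(x(t))|\le |\bO|$ and $x_i(t)\in P_i(x(t))$ (a ``stay'' trivially satisfies both Pareto-improvement inequalities), any particular legal one-step update $(i,z)$ with $z\in P_i(x(t))$ occurs with probability at least $q:=1/(n|\bO|)>0$.

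Second, I would use the hypothesis to build a uniform lower bound on absorption probabilities. For each $x\in \bO^n$, fix a legal sequence $(i_1,z_1),\dots,(i_{T(x)},z_{T(x)})$ ending at some equilibrium, and set $T^\star:=\max_{x\in \bO^n} T(x)$, which is finite since $\bO^n$ is finite. By the Markov property and the one-step bound above, the probability that the chain started at $x$ follows exactly this prescribed sequence for $T(x)$ steps and then remains at the resulting equilibrium for the remaining $T^\star-T(x)$ steps is at least $p^\star:=q^{T^\star}>0$. Writing $\Omega^*$ for the set of equilibria, this yields $\Pr\bigl[x(t+T^\star)\in \Omega^*\,\big|\,x(t)=x\bigr] \ge p^\star$ uniformly in $x$.

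Third, iterating this bound over consecutive blocks of length $T^\star$ gives $\Pr[x(kT^\star)\notin \Omega^*] \le (1-p^\star)^k$ for every $k\in \mathbb{N}$, which tends to $0$; since $\Omega^*$ is absorbing, the first hitting time $\tau$ of $\Omega^*$ therefore satisfies $\Pr[\tau<\infty]=1$. The main subtlety I anticipate is justifying that the ``random choice'' in Definition~\ref{def:PID-op-dyn} assigns strictly positive mass to every element of $P_i(x(t))$ (in particular to the prescribed $z_t$ at each step of the legal sequence); once this mild regularity of the update rule is granted, the rest of the argument is a routine application of the absorbing-Markov-chain machinery and the only genuinely new ingredient is the uniformity of $T^\star$ over the finite state space $\bO^n$.
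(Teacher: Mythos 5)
Your proposal is correct and follows essentially the same route as the paper, which simply invokes the standard absorbing-Markov-chain argument (citing Lemma~7 of the weighted-median paper): equilibria are absorbing states, every legal one-step update has probability at least $1/(n|\bO|)$, so the prescribed finite legal sequence is executed with uniformly positive probability and iterating over blocks gives a geometric tail on the hitting time. The one subtlety you flag—that the random choice from $P_i(x(t))$ must put positive mass on each element—is indeed the only regularity needed, and it holds under the uniform selection implicit in Definition~\ref{def:PID-op-dyn}.
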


The following theorem establishes the almost-sure convergence of the PID opinion dynamics.
\begin{theorem}[Almost-sure finite-time convergence]\label{th:as-ft-cv}  
Consider the PID opinion dynamics given by Definition~\ref{def:PID-op-dyn}. For any initial state $x(0) \in \bO^n$, $x(t)$ almost surely reaches an equilibrium in finite time.
\end{theorem}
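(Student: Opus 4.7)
The plan is to apply Lemma~\ref{lem:tf-rd-sd}, which reduces the statement to a purely deterministic one: from any $x(0) \in \bO^n$, exhibit a legal finite update sequence terminating at an equilibrium. Since $\bO^n$ is finite, it suffices to design a potential function on $\bO^n$ that is driven strictly downward by some legal Pareto-improving move whenever the current state is not an equilibrium.

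The natural choice is the aggregate cognitive cost
\[
\Phi(x) \;=\; \sum_{i \in \V} |x_i - \theta|,
\]
which is a non-negative integer bounded above by $n \cdot \max_{y \in \bO}|y - \theta|$. My central claim is that whenever $x \in \bO^n$ is not an equilibrium, there exists a node $j$ and an opinion $\mu \in \bO$ with $\mu \in P_j(x)$ and $|\mu - \theta| < |x_j - \theta|$. Chaining such single-node moves produces a legal update sequence along which $\Phi$ strictly decreases at every step, so it must terminate after at most $\Phi(x(0))$ updates; the terminal state is necessarily an equilibrium, at which point Lemma~\ref{lem:tf-rd-sd} delivers the theorem.

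The claim itself is where I would re-use the construction from the ``only if'' direction of Theorem~\ref{th:equi}. If $x$ is not an equilibrium, then by that theorem it is simultaneously non-consensus and violates Statement~2). Since $\Vle{z}(x)$ and $\Vge{z}(x)$ only change across realized opinion values, the violation supplies either a node $i$ with $x_i < \theta$ and $\Vle{x_i}(x)$ not strictly cohesive, or symmetrically a node $i$ with $x_i > \theta$ and $\Vge{x_i}(x)$ not strictly cohesive. In the first case, the only-if argument of Theorem~\ref{th:equi} already produces a node $j \in \Vle{x_i}(x)$ and $\mu = \min\{y \in \bO : y > x_j\}$ satisfying $\mu \leq \theta$, $\Csocial^j(\mu;x) \leq \Csocial^j(x_j;x)$, and $\Ccog^j(\mu;\theta) < \Ccog^j(x_j;\theta)$, so $\mu \in P_j(x)$ and $|\mu-\theta| < |x_j-\theta|$ as required. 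The symmetric case follows by the same piecewise-affine analysis applied on the interval just below $x_j$, yielding a downward move $\mu = \max\{y \in \bO : y < x_j\}$ with $\mu \geq \theta$.

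I expect the main obstacle to be conceptual rather than technical: making precise that weak inequalities in the definition of $P_j$ still suffice to push $\Phi$ strictly down, i.e., that a move which only weakly improves the social cost but strictly improves the cognitive cost is indeed both legal and potential-decreasing. In particular, I would verify the symmetric downward case explicitly — rerunning the sign computation on $\sum_{k:\, x_k \geq x_j} w_{jk} - \sum_{k:\, x_k < x_j} w_{jk}$ over the interval $(\mu, x_j)$ to confirm the required non-positive slope — since that direction is left implicit in Theorem~\ref{th:equi}. Once this is in hand, almost-sure finite-time convergence follows immediately via Lemma~\ref{lem:tf-rd-sd}.
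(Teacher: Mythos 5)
Your proposal is correct, and it takes a genuinely different route from the paper. The paper proves the reduction target of Lemma~\ref{lem:tf-rd-sd} constructively: it builds, for each initial condition, an explicit legal sequence that sweeps inward toward $\theta$ on each side, using cohesive expansions $\E(\Vge{\psi^k}(\cdot))$ to coalesce nodes level by level until every sublevel set below $\theta$ and superlevel set above $\theta$ is empty or strictly cohesive, then checks the terminal state is an equilibrium via Theorem~\ref{th:equi}. You instead use Theorem~\ref{th:equi} as a black box in the other direction: its ``only if'' construction already exhibits, at any non-equilibrium $x$, a node $j$ and a one-step move $\mu$ toward $\theta$ with $\Csocial^j(\mu;x)\le \Csocial^j(x_j;x)$ and $\Ccog^j(\mu;\theta)<\Ccog^j(x_j;\theta)$, so the integer-valued potential $\Phi(x)=\sum_i|x_i-\theta|$ drops by at least one along a legal move; greedy chaining therefore terminates at an equilibrium within $\Phi(x(0))\le n\max_{y\in\bO}|y-\theta|$ steps, and Lemma~\ref{lem:tf-rd-sd} finishes the job. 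Your two verification points both check out: weak improvement of $\Csocial^j$ is enough for legality since $P_j$ is defined with non-strict inequalities (the strict decrease is carried entirely by $\Ccog^j$, hence by $\Phi$), and in the symmetric downward case the relevant quantity $\sum_{k:\,x_k\ge x_j}w_{jk}-\sum_{k:\,x_k< x_j}w_{jk}\le 0$ follows from $\sum_{k:\,x_k\ge x_i}w_{jk}\le\frac{1}{2}$ and $x_j\ge x_i$, exactly mirroring the paper's upward computation. Your argument is shorter and yields an explicit linear-in-$\Phi$ bound on the sequence length; the paper's constructive sweep is longer but produces structural by-products (updates never cross $\theta$, the role of cohesive expansions) that are reused in the later lemmas on the emergence of truth.
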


\begin{proof}    
We first point out a simple fact: once an individual $i$ expresses the opinion $\theta$, their opinion will always remain unchanged because any deviation from the truth $\theta$ would increase their cognitive dissonance $\Ccog^i$.

According to Lemma~\ref{lem:tf-rd-sd}, the PID opinion dynamics almost surely converge in finite time if, for any given initial condition $x(0)\in \bO^n$, we can construct a legal update sequence, along which the solution $x(t)$ reaches an equilibrium in finite time. Obviously, if $x(0)$ is a consensus state, then $x(0)$ is already an equilibrium. From now on we suppose $x(0)$ is not a consensus state. 

Before presenting the formal steps, we sketch the update sequence. We treat the two sides of $\theta$ separately and never allow updates to cross $\theta$. On the side $z\le \theta$:
\begin{enumerate}
    \item We start from the closest occupied opinion  below $\theta$ whose sublevel set is not strictly cohesive (denoted as $\psi^0$ in the proof).
    \item  We then pull inward: repeatedly update any node $i$ that is farther from $\theta$ on the same side to the current threshold opinion (e.g., to $\psi^0$) if the sum of the weights on the edges from node $i$ to nodes with opinions below $\psi^0$ is less than $0.5$, until $\calV_{<\psi^0}(\cdot)$ becomes empty or strictly cohesive. This is essentially the construction process of $\E(\calV_{\ge \psi^0}(x(0)))$.
    \item Then we move to the next closest occupied opinion below $\theta$ (denoted $\psi^1$), and iterate the same inward “coalescing” procedure. 
\end{enumerate}
The above precedures will generate a finite sequence $\psi^0,\psi^1,\ldots$ that monotonically sweeps inward toward $\theta$, rendering every sublevel set $V_{<z}(\cdot)$ (for $z\le \theta$) strictly cohesive (or empty). We then mirror the same inward sweep on the side $z\ge \theta$, starting from the closest occupied opinion above $\theta$ whose superlevel set is not strictly cohesive (denoted $\phi_0$), and iterating analogously. After these two operations, all sub/superlevel sets around $\theta$ are strictly cohesive (or empty), so no further Pareto-improving update is possible and an equilibrium is reached. The detailed argument is specified as follows.

\emph{Step 1 (starting point of iteration):} Let 
\begin{align*}
    \psi^0 & = \max\{z\in \bO\,|\, z\le \theta,\, \Vl{z}(x(0))\text{ is non-empty and} \\
    & \qquad \qquad\qquad\,\,\,\,\,\text{not strictly cohesive}\}\\
    \varphi^0 & = \min\{z\in \bO\,|\, z\ge \theta,\, \Vg{z}(x(0))\text{ is non-empty and} \\
    & \qquad \qquad\qquad\,\,\,\,\,\text{not strictly cohesive}\}.
\end{align*}
If $\psi^0$ does not exist, then either $\Vl{\theta}(x(0))$ is empty or $\Vl{z}(x(0))$ is a strictly cohesive set for any $z\le\theta$ and $z\in \bO$. In the latter case, according to equation~\eqref{eq:proof-thm-equi-strict-cohesive} and the argument below it in the proof of Theorem~\ref{th:equi}, $P_i(x(0))=\{x_i(0)\}$ for any $i\in \Vl{\theta}(x(0))$. The same argument also applies to the case when $\varphi^0$ does not exist. Therefore, if neither $\psi^0$ nor $\varphi^0$ exists, then $x(0)$ is already an equilibrium.

\emph{Step 2 (iterations of nodes on one side of $\theta$):} Without loss of generality, suppose $\psi^0$ exists. For any $t\ge 0$, if there exists $i\in \Vl{\psi^0}(x(0))$ such that 
\begin{align*}
    \sum_{j:\, x_j(t)\ge \psi^0} w_{ij}\ge \frac{1}{2} \ge \sum_{j:\, x_j(t)<\psi^0} w_{ij},
\end{align*}
then one could easily check that 
\begin{align*}
    \Csocial^i(\psi^0;x(t)) & \le \Csocial^i(x_i(t);x(t)),\text{ and}\\
    \Ccog^i(\psi^0;\theta) & < \Ccog^i(x_i(t);\theta).
\end{align*}
Let individual $i$ update their opinion to $\psi^0$, which is a legal update under the Pareto-improvement rule.
Since $\Vl{\psi^0}(x(0))$ is a finite set, the above procedure must terminate at some finite time $t_1$. Note that, from $t=0$ to $t_1$, the node update order is the same as a feasible node addition sequence for the cohesive expansion $\E(\calV_{\ge \psi^0}(x(0)))$. 

At time $t_1$, $\Vl{\psi^0}(x(t_1))$ is either empty or non-empty and strictly cohesive. This is becasue that, if $\Vl{\psi^0}(x(t_1))$ is non-empty, there exists no node $i$ with $\sum_{j:\, x_j(t_1)<\psi^0} w_{ij}\le \frac{1}{2}$, which implies that $\Vl{\psi^0}(x(t_1))$ is strictly cohesive. Suppose $\Vl{\psi^0}(x(t_1))$ is non-empty, let 
\begin{align*}
    \psi^1 = \max_{i\in \Vl{\psi^0}(x(t_1))} x_i(t_1).
\end{align*}
Then $\Vl{\psi^1}(x(t_1))$ must be in one of the two cases below:

\emph{Case 1:} $\Vl{\psi^1}(x(t_1))$ is empty, which means that all the individuals in $\Vl{\psi^0}(x(t_1))$ hold the opinion $\psi^1$ at time $t_1$ and $\{i\in \calV\,|\, x_i(t_1)=\psi^1\}=\Vl{\psi^0}(x(t_1))$ forms a strictly cohesive set. Therefore, individuals in $\Vl{\psi^0}(x(t_1))$ will never change their opinions from now on. In this case, terminate Step 2 and proceed to Step 3;

\emph{Case 2:} $\Vl{\psi^1}(x(t_1))$ is non-empty. In this case, if $\Vl{\psi^1}(x(t_1))$ is not strictly cohesive, we have $\E(\Vge{\psi^1}(x(t_1)))\neq \emptyset$. Applying the same operation as performed on $\psi^0$, we update all nodes in $\E(\Vge{\psi^1}(x(t_1)))\setminus \Vge{\psi^1}(x(t_1))$ to $\psi^1$ following the node addition order as the cohesive expansion $\E(\Vge{\psi^1}(x(t_1)))$. This procedure ends at time step $t_2$, when either $\Vl{\psi^1}(x(t_2))$ is empty, which is Case 1, or $\Vl{\psi^1}(x(t_2))$ is strictly cohesive. If $\Vl{\psi^1}(x(t_1))$ is already strictly cohesive, we will not let nodes in $\calV_{<\psi^1}(x(t_1))$ update to $\psi^1$ even though there may exist some nodes that can do so. Let $t_2=t_1$ for notational uniformity on this special occasion. 

At time step $t_2$, for any $\psi \in \bO$ and $\psi^1\le\psi\le \theta$, $\calV_{<\psi}(x(t_2))$ is strictly cohesive. Therefore, for any individual holding opinion $\psi$, updating their opinion to any $z>\psi$ will strictly increase the social cost $\Csocial^i$. In the meanwhile, updating their opinion to any $z<\psi$ will strictly decrease the cognitive cost $\Ccog^i$. Therefore, 
\begin{align*}
    P_i(x(t_2))=\{x_i(t_2)\}
\end{align*}
for any $i$ with $x_i(t_2) \in \bO \cap [\psi^1,\theta]$.

Now we further update the opinions of individuals in $\calV_{<\psi^1}(x(t_2))$ as follows. For any $k\ge 1 $, if $\calV_{<\psi^k}(x(t_{k+1}))$ is non-empty, let \begin{align*}
    \psi^{k+1}=\max_{i:x_i(t_{k+1})<\psi^k} x_i(t).
\end{align*}
Repeat the above procedures described in Case 1 and Case 2. Since $\calV_{<\psi^0}(x(0))$ is a finite set, such procedures must terminate at some time $t_K$ and $\calV_{<\psi ^{K-1}}(x(t_K))$ is empty. As a consequence, $\calV_{<z}(x(t_K))$ is strictly cohesive for any $z\le \theta$ and \begin{align*}
    P_i(x(t_K))=\{x_i(t_K)\}, \,\forall i \in \calV_{\le \theta}(x(t_K)).
\end{align*}

\emph{Step 3 (the same iterations on the other side of $\theta$):}
Now we deal with the nodes holding opinions no less than $\theta$. Since Step 2 only involves updates for nodes with opinions less than $\theta$, and no node updates across the ``mid-point'' $\theta$, we have $\calV_{> \theta}(x(t_K))=\calV_{> \theta} (x(0))$. If $\varphi ^0$ does not exist, then, according to the discussion in Step 1, $P_i(x(t_K))=\{x(t_K)\}$ for any $i\in \calV_{\ge \theta}(x(t_K))$ and $x(t_K)$ is thereby already an equilibrium. If $\varphi^0$ exists, update the opinions of individuals in $\calV_{\ge \theta}(x(t_K))$ via a similar procedure as in Step 2. Such updates will also terminate at some finite time $t_{K'}$, when $P_i (x(t_{K'}))=\{x_i(t_{K'})\} $ for any $i\in \calV_{> \theta}(x(t_{K'}))$.

\emph{Step 4 (check for whether an equilibrium is reached):}
For any $t\in \{t_{K+1},...,t_{K'}\}$, only individuals with $x(t)>\theta$ update their opinions, and their opinions are not updated to any value less than $\theta$. Therefore, the property ``$\calV_{<z}(x(t))$ is strictly cohesive for any $z\le \theta$'', established at the end of Step 2, still holds at $t=t_{K'}$. Therefore, for any $i\in \calV$, $P_i(x(t_{K'}))=\{x_i(t_{K'})\}$. That is, along the update sequence constructed above, $x(t)$ reaches an equilibrium at some finite time $t_{K'}$. This concludes the proof.  
\end{proof}  
Theorem~\ref{th:as-ft-cv} indicates that the PID opinion dynamics almost surely reach an equilibrium in finite time, independent of network structures and initial states. This property justifies the PID opinion dynamics as a well-posed model. 

\subsection{Consensus and prevalence of truth}
In this subsection, we examine the network conditions under which the PID opinion dynamics consensus. Building on this analysis, we analyze conditions for the prevalence of truth, i.e., when all individuals eventually express the opinion $\theta$, provided that at least one individual initially does so. 

The following theorem establishes a graph-theoretic necessary and sufficient condition for almost-sure convergence to consensus from any initial state.

\begin{theorem}[Condition for consensus]\label{th:cd-cs}  Consider the PID opinion dynamics given by Definition~\ref{def:PID-op-dyn}. The following two statements are equivalent:
\begin{enumerate}[label=\arabic*)]
\item For any $x(0) \in \bO^n$, the solution $x(t)$ almost surely converges to consensus in finite time.
\item The only strictly cohesive set in $\calV$ is $\calV$ itself.
 \end{enumerate}
 \end{theorem}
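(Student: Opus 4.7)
The plan is to combine Theorem~\ref{th:equi} with Theorem~\ref{th:as-ft-cv}, handling each implication separately.

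For $2)\Rightarrow 1)$, I would first invoke Theorem~\ref{th:as-ft-cv} to assert that, for any $x(0)\in\bO^n$, the trajectory almost surely reaches some equilibrium $x^*$ in finite time, so it suffices to show that every equilibrium is a consensus state. Applying Theorem~\ref{th:equi} to $x^*$, if $x^*$ is not already consensus, then every sublevel set $\calV_{\le z}(x^*)$ with $z<\theta$ and every superlevel set $\calV_{\ge z}(x^*)$ with $z>\theta$ must be strictly cohesive in $\calG(W)$. Under Statement~2), any such strictly cohesive set is either empty or equal to $\calV$. Let $m=\min_i x_i^*$ and $M=\max_i x_i^*$. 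If $m<\theta$, then $\calV_{\le m}(x^*)$ is non-empty and strictly cohesive, so it must equal $\calV$, forcing $x_i^*=m$ for every $i\in\calV$. Symmetrically, if $M>\theta$, then $\calV_{\ge M}(x^*)=\calV$ and $x_i^*=M$ for all $i$. In the remaining case $m\ge\theta$ and $M\le\theta$, every opinion equals $\theta$. All branches yield consensus.

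For $1)\Rightarrow 2)$, I would argue by contrapositive: assuming there exists a non-empty proper strictly cohesive set $\mathcal{M}$ with $\emptyset\neq\mathcal{M}\subsetneq\calV$, I would exhibit an initial condition that is already a non-consensus equilibrium, so convergence to consensus fails with probability one. Since $|\bO|\ge 2$, either some $z_1\in\bO$ satisfies $z_1<\theta$ or some $z_1\in\bO$ satisfies $z_1>\theta$. In the first sub-case, define $x_i^*=z_1$ for $i\in\mathcal{M}$ and $x_i^*=\theta$ otherwise. Then, for every $z\in\bO$ with $z<\theta$, the sublevel set $\calV_{\le z}(x^*)$ equals either $\emptyset$ (if $z<z_1$) or $\mathcal{M}$ (if $z_1\le z<\theta$), both strictly cohesive; and $\calV_{\ge z}(x^*)=\emptyset$ for every $z>\theta$. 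Theorem~\ref{th:equi} therefore certifies $x^*$ as a non-consensus equilibrium. The second sub-case is handled by the symmetric construction, assigning $z_1>\theta$ to $\mathcal{M}$ and $\theta$ elsewhere, so that every relevant superlevel set is either empty or $\mathcal{M}$.

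The only place that requires mild care is the contrapositive boundary case in which $\theta$ sits at one end of $\bO$; this is handled by selecting the seed opinion on the opposite side of $\theta$, exploiting the symmetric roles of sublevel and superlevel sets in Theorem~\ref{th:equi}. Beyond that, the argument is a direct consequence of the equilibrium characterization and the almost-sure finite-time convergence already established, so no new machinery is needed.
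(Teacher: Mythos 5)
Your proposal is correct and follows essentially the same route as the paper: the forward direction is proved by the same contrapositive construction (seeding a proper non-empty strictly cohesive set with an opinion $z_1\neq\theta$ and certifying it as a non-consensus equilibrium via Theorem~\ref{th:equi}), and the converse combines Theorem~\ref{th:as-ft-cv} with the observation that Statement~2) forces every equilibrium satisfying the level-set condition of Theorem~\ref{th:equi} to be a consensus. Your min/max argument for the latter step is a slightly cleaner packaging of the paper's two-case analysis, but it is the same idea.
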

\begin{proof} "1) $\Rightarrow$ 2)": We prove the contrapositive statement. Suppose there exists a strictly cohesive set $ \calV_1 \subset \mathcal{V}$. We construct an initial point $x(0)\in \bO^n$ that is already a non-consensus equilibrium. For any $ i \in \calV_1$, let $ x_i(0)=z_1, z_1 \in \bO \setminus{\theta} .$  For any $ i \in \mathcal{V}\setminus \calV_1$ , let $x_i(0)=\theta.$ It follows that $x(0)$ satisfies Statement 2) in Theorem~\ref{th:equi} and is thereby an equilibrium. In addition, if the initial condition is randomly generated from $\bO^n$, then such a $x(0)$ occurs with non-zero probability. Therefore, the almost-sure convergence to consensus does not hold.

"2) $\Rightarrow$ 1)": It suffices to prove that all the equilibrium points are consensus states under Statement 2). Suppose, on the contrary, there exists a non-consensus equilibrium $x^*$. Then there exists $i,j\in \mathcal{V}$ such that $x_i^*\neq x_j^*$. By symmetry and without loss of generality, assume $x_i^*\leq\theta$. We show that both of the following two cases lead to contradiction.
\begin{itemize}
    \item Case 1: If $|x_i^*-\theta|\ge|x_j^*-\theta|$, then $x_i^*\neq \theta$ (Otherwise, $x_j^*=\theta=x_i^*$). Moreover, since $x_i^*\le \theta$ and $x_i^*\neq x_j^*$, we have $x_i^*<x_j^*$. Since $\calV$ is the only strictly cohesive set in $\calV$ and $ \Vle{x_i^*}(x^*)\neq \V$, $\calV_{\leq x^*_i}(x^*)$ is not a strictly cohesive set. By Statement 2) in Theorem~\ref{th:equi}, $x^*$ is not an equilibrium.
    \item Case 2: If $|x_i^*-\theta|<|x_j^*-\theta|$, since $x_i^*\le \theta$, we have either $x_j^*<x_i^*\le \theta$ or $x_i^*<\theta<x_j^*$. In the former case, $\Vle{x_j^*}(x^*)\neq \V$ and is thus not strictly cohesive; In the latter case, $\Vge{x_j^*}(x^*)\neq \V$ and is thus not strictly cohesive. According to Theorem~\ref{th:equi}, in any of these two cases, $x^*$ is not an equilibrium.  
\end{itemize}
The discussion of the two cases above concludes the proof.  \end{proof}

The proof of Theorem~\ref{th:cd-cs} derives the following necessary and sufficient condition for the prevalence of truth.

\begin{Corollary}[ Condition for prevalence of truth]\label{th:cd-cv-th}  
Consider the PID opinion dynamics given by Definition~\ref{def:PID-op-dyn}. Let $\mathcal{X}_0=\{x(0)\in \bO^n\,|\, \exists i\in \calV \,\mathrm{s.t.,} \, x_i(0)=\theta\}$. The following two statements are equivalent:
\begin{enumerate}[label=\arabic*)]
\item For any $x(0)\in \mathcal{X}_0$, and along any update sequence, the system almost surely converges to consensus on $\theta$ in finite time.
\item The only strictly cohesive set in $\mathcal{V}$ is $\mathcal{V}$.
\end{enumerate}
\end{Corollary}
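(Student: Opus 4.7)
My plan is to deduce this corollary from Theorems~\ref{th:equi}, \ref{th:as-ft-cv}, and \ref{th:cd-cs}, together with the simple invariant observed at the start of the proof of Theorem~\ref{th:as-ft-cv}: once an individual expresses $\theta$, any deviation strictly increases $\Ccog^i$, so the opinion $\theta$ is absorbing at the individual level. Since Statement~2) of this corollary coincides verbatim with Statement~2) of Theorem~\ref{th:cd-cs}, the task reduces to showing that restricting initial states to $\mathcal{X}_0$ only changes the value on which consensus is reached, pinning it to $\theta$.

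For the direction $2)\Rightarrow 1)$, I would first invoke Theorem~\ref{th:cd-cs} to obtain that for any $x(0)\in \mathcal{X}_0\subseteq \bO^n$, the trajectory almost surely reaches consensus in finite time. Then I would use the invariant: fix any $i$ with $x_i(0)=\theta$; by induction on $t$, $x_i(t)=\theta$ along every legal update sequence. Hence the only consensus state the process can reach from $\mathcal{X}_0$ is $\theta\vect{1}_n$, which yields Statement~1).

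For the direction $1)\Rightarrow 2)$, I would argue by contrapositive and adapt the counterexample already used in Theorem~\ref{th:cd-cs}. Given a proper strictly cohesive set $\calV_1\subsetneq \calV$, I would pick any $z_1\in \bO\setminus\{\theta\}$, set $x_i(0)=z_1$ for $i\in \calV_1$, and set $x_i(0)=\theta$ for every $i\in \calV\setminus \calV_1$. Properness of $\calV_1$ guarantees the latter set is non-empty, so $x(0)\in \mathcal{X}_0$. It remains to verify that $x(0)$ satisfies the strictly-cohesive sublevel/superlevel conditions of Theorem~\ref{th:equi}: in the subcase $z_1<\theta$ the only non-empty sublevel set below $\theta$ is $\calV_1$ (strictly cohesive by assumption) and every superlevel set strictly above $\theta$ is empty; the subcase $z_1>\theta$ is symmetric. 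Thus $x(0)$ is a non-consensus equilibrium lying in $\mathcal{X}_0$, contradicting Statement~1).

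I do not anticipate a real obstacle here, since the substantive work is already contained in Theorems~\ref{th:equi} and~\ref{th:cd-cs}; the corollary is genuinely a refinement rather than a new statement. The only point requiring mild care is the absorption argument in the $2)\Rightarrow 1)$ direction: one must verify the invariant holds along \emph{every} legal update sequence (not merely in expectation or with positive probability), so that the almost-sure consensus value supplied by Theorem~\ref{th:cd-cs} is deterministically $\theta$ whenever some initial opinion equals $\theta$.
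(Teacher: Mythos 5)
Your proposal is correct and follows essentially the same route as the paper: the direction $2)\Rightarrow 1)$ combines Theorem~\ref{th:cd-cs} with the absorption of the opinion $\theta$ (non-increase of $\Ccog^i$ along every legal update), and the direction $1)\Rightarrow 2)$ uses the contrapositive with the same seeded non-consensus equilibrium ($\calV_1$ at some $z_1\neq\theta$, the rest at $\theta$), verified via Theorem~\ref{th:equi}. Your extra check that the empty sublevel/superlevel sets are strictly cohesive by convention is a detail the paper leaves implicit but is consistent with Definition~\ref{def:scs}.
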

\begin{proof}``2) $\Rightarrow$ 1)'': By Theorem~\ref{th:cd-cs}, the system almost surely converges to consensus in finite time. Consider any $x(0)\in \mathcal{X}_0$. Since there exists $i\in \mathcal{V}$ such that $x_i(0)=\theta$, then $C^i_{cog}(x_i(0);\theta)=0$. The Pareto-improvement rule guarantees that $C^i_{cog}$ is non-increasing, therefore we have $x_i(t)\equiv \theta$. Therefore, consensus can only be achieved on $\theta$. Incorporating the system's almost-sure finite-time convergence property, we conclude that the system almost-surely converges to consensus on $\theta$. 

``1) $\Rightarrow$ 2)'': We prove the contrapositive. Suppose there exists a nonempty strictly cohesive subset $\calV_1 \neq \calV$. Construct an initial state in which all nodes in $\calV_1$ hold the same opinion $x_0 \in \bO \setminus {\theta}$, while all nodes in $\calV \setminus \calV_1$ hold the opinion $\theta$. It is straightforward to verify, by Theorem~\ref{th:equi}, that this configuration is already an equilibrium of the PID opinion dynamics. Consequently, there exists an initial condition $x(0) \in \mathcal{X}_0$ under which the system fails to reach consensus along any update sequence. Furthermore, if the initial condition is randomly drawn from $\mathcal{X}_0$, such a configuration occurs with nonzero probability. Therefore, almost-sure convergence to consensus on $\theta$ does not hold. This completes the proof.
 
\end{proof}

Note that the necessary and sufficient condition ``$\V$ is the only strictly cohesive set'' in Theorem~\ref{th:cd-cs} and Corollary~\ref{th:cd-cv-th} differs from the necessary and sufficient condition for almost-sure consensus in the weighted-median opinion dynamics~\cite{MW-H-JM-CG-BF-DF:24}, which is ``$\V$ is the only maximal cohesive set''. In the weighted-median model, individuals' opinion updates are only driven by social pressure. The condition ``$\V$ is the only strictly cohesive set'' admits a natural sufficient condition: “The only cohesive set in $\calV$ is $\calV$ itself.” These two conditions differ only by a marginal threshold and are therefore generically equivalent. To provide a clearer interpretation of the former, we next present an intuitive and equivalent characterization of the latter. Its proof is provided in Appendix~\ref{app:proof-traverse}.

\begin{proposition}\label{prop:traverse}
Consider an influence network $\calG (W)$ with the node set $\mathcal{V}$. The node set $\calV$ itself is the only cohesive set in $\calV$ if and only if the edges in $\calG(W)$ with $w_{ij}>\frac{1}{2}$ form a directed cycle that traverses all the nodes in $\calV$. 
\end{proposition}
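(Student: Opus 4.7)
The plan rests on a simple observation: since $W$ is row-stochastic, every node $i$ has at most one strong out-neighbor, i.e., at most one $j$ with $w_{ij}>\frac{1}{2}$. Let $H$ denote the subgraph of strong edges and, whenever defined, let $c(i)$ denote $i$'s unique strong out-neighbor.

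The easy direction goes as follows. Suppose $H$ is a directed cycle traversing all of $\V$. Then $c$ is a total cyclic permutation of $\V$. Any non-empty cohesive set $\mathcal{M}$ must be closed under $c$: if some $i\in\mathcal{M}$ had $c(i)\notin\mathcal{M}$, then $\sum_{j\in\mathcal{M}}w_{ij}\le 1-w_{i,c(i)}<\frac{1}{2}$, contradicting cohesion. Iterating $c$ from any seed in $\mathcal{M}$ sweeps the whole Hamiltonian cycle, so $\mathcal{M}=\V$.

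For the converse I argue by contraposition: assuming $H$ is not a directed Hamiltonian cycle on $\V$, I exhibit a proper cohesive subset. \textbf{Step 1} shows that every node has a strong out-neighbor. Otherwise, with $S$ the set of nodes that do, the number of strong edges equals $|S|<n$; hence the distinct targets of these edges number at most $|S|<n$, so by pigeonhole some node $u$ is not the target of any strong edge, i.e., $w_{u'u}\le\frac{1}{2}$ for every $u'\ne u$. Then $\V\setminus\{u\}$ is cohesive, because for each $u'\in\V\setminus\{u\}$, $\sum_{j\ne u}w_{u'j}=1-w_{u'u}\ge\frac{1}{2}$; this is the desired proper cohesive subset. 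In \textbf{Step 2}, once $c$ is defined on all of $\V$, the subgraph $H$ is a functional digraph, whose weakly connected components each contain a unique directed cycle with possibly some in-trees feeding into it. If $H$ is not a single spanning cycle, either multiple components exist or one contains in-trees; in either case $H$ contains a directed cycle $C\subsetneq\V$. Such $C$ is cohesive, since $c(i)\in C$ for each $i\in C$ gives $\sum_{j\in C}w_{ij}\ge w_{i,c(i)}>\frac{1}{2}$, completing the contradiction.

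I expect Step~1 to be the main obstacle. The existence of a weak node (one with no strong out-edge) does not by itself yield a cohesive subset, because its outgoing weight may be scattered across many targets. The real content is the translation, via pigeonhole on edge targets, from ``some node has no strong out-edge'' to ``some node has no strong in-edge'' — only the latter directly permits removing a node while preserving cohesion. Step~2 is then a routine piece of functional-digraph combinatorics.
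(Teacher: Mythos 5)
Your proof is correct and follows essentially the same route as the paper's: both directions hinge on the facts that row-stochasticity gives each node at most one strong out-neighbor, that a node $u$ with no strong in-edge makes $\calV\setminus\{u\}$ cohesive, and that any directed cycle of strong edges is itself cohesive. The only difference is organizational — you prove the converse by contraposition with a pigeonhole on edge targets, whereas the paper argues directly that every node has exactly one strong in- and out-neighbor, so the strong-edge subgraph decomposes into disjoint cycles, of which there can be only one.
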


\subsection{Emergence of truth}
In this subsection, we analyze the system’s behavior when no individual initially expresses opinion $\theta$. Specifically, we derive the conditions under which the PID opinion dynamics can achieve consensus on the truth under this constraint. In the rest of this subsection, we set $\theta$ to $0$ for brevity of presentation. If $\theta \neq 0$, one can subtract $\theta$ from all opinions and set $\theta$ to $0$. This transformation does not affect the values of $\Csocial$ and $\Ccog$, and thus leaves the dynamical process invariant. 

Several definitions and lemmas are introduced as intermediate steps to establish the main theorems in this subsection.
\begin{lemma}[Property of pareto-improvement set]\label{lemma:prop-pis}
Consider the PID opinion dynamics given by Definition~\ref{def:PID-op-dyn}. For any $x\in \bO^n$ and any $i\in \V$, the Pareto-improvement set $P_i(x)$ satisfies that, for any $\alpha,\, \beta\in P_i(x)$, if $\alpha<\beta$, then $[\alpha,\beta]\cap \bO^n \subseteq P_i(x)$. 
\end{lemma}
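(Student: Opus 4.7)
The plan is to exploit the convexity of both cost functions in the decision variable $z$. The social cost $C_{\text{social}}^i(z;x)=\sum_{j=1}^n w_{ij}|z-x_j|$ is a sum of absolute-value functions and therefore convex (in fact piecewise affine) in $z\in\mathbb{R}$; the cognitive cost $C_{\text{cog}}^i(z;\theta)=|z-\theta|$ is trivially convex in $z$. Consequently, each of the two sublevel sets
\begin{align*}
S_1 &= \bigl\{z\in\mathbb{R}\,\big|\,C_{\text{social}}^i(z;x)\le C_{\text{social}}^i(x_i;x)\bigr\},\\
S_2 &= \bigl\{z\in\mathbb{R}\,\big|\,C_{\text{cog}}^i(z;\theta)\le C_{\text{cog}}^i(x_i;\theta)\bigr\}
\end{align*}
is a (closed) interval in $\mathbb{R}$.

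First I would observe that, by Definition~\ref{def:PID-op-dyn}, $P_i(x)=(S_1\cap S_2)\cap\bO$. The intersection of two intervals in $\mathbb{R}$ is itself an interval, say $[L,R]$ (possibly empty). Next I would take arbitrary $\alpha,\beta\in P_i(x)$ with $\alpha<\beta$; both belong to $[L,R]$, so $L\le\alpha<\beta\le R$. For any $\gamma\in[\alpha,\beta]\cap\bO$, we have $\gamma\in[L,R]\subseteq S_1\cap S_2$, so both Pareto inequalities are satisfied at $\gamma$. Since $\gamma\in\bO$ as well, $\gamma\in P_i(x)$, which gives $[\alpha,\beta]\cap\bO\subseteq P_i(x)$.

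There is no real obstacle here; the only thing to be a little careful about is to justify that the sublevel set of a convex real-valued function on $\mathbb{R}$ is an interval, which is immediate from the definition of convexity (if $z_1,z_2$ lie in the sublevel set and $\lambda\in[0,1]$, then $f(\lambda z_1+(1-\lambda)z_2)\le \lambda f(z_1)+(1-\lambda)f(z_2)\le$ the threshold). Alternatively, one could argue directly from the piecewise-affine structure of $C_{\text{social}}^i(\cdot;x)$ already exploited in the proof of Theorem~\ref{th:equi}: its slope is non-decreasing in $z$, so its sublevel sets are intervals. Either route yields the lemma in a few lines.
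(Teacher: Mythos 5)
Your proposal is correct and rests on the same idea as the paper's proof: convexity of both $\Csocial^i(\cdot;x)$ and $\Ccog^i(\cdot;\theta)$ in $z$, which the paper applies directly by writing $z=\lambda\alpha+(1-\lambda)\beta$ and bounding each cost by the convex combination, while you phrase it equivalently as the sublevel sets being intervals whose intersection is an interval. This is a repackaging of the identical argument, so no substantive difference.
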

The proof of the above lemma is provided in Appendix~\ref{app:proof-lemma:prop-pis}. The following definition characterizes a special type of opinion updates in the PID opinion dynamics. 

\begin{definition}[Crossing update]\label{def:trv}
Consider the PID opinion dynamics given by Definition~\ref{def:PID-op-dyn}. The opinion update at time step $t$ is a crossing update if there exists $i\in \V$ such that $(x_i(t)-\theta)(x_{i}(t-1)-\theta)<0$. Namely, node $i$ updates its opinion across $\theta$ at time $t$.
\end{definition}

\begin{lemma}[Elimination of crossing  update]\label{lem:ex-wt-trv}
Consider the PID opinion dynamics given by Definition~\ref{def:PID-op-dyn}. Let $\bO=\{-m,...,0,...,M\}$ with $m,M\ge1$ and $\theta=0$. For any $x(0)\in \bO ^n$, if there exists a finite update sequence along which the opinion trajectory reaches consensus on 0, then there exists a finite update sequence without crossing updates along which the opinion trajectory reaches consensus on 0. 
\end{lemma}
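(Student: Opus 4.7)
My plan is to construct the desired no-crossing sequence $S'$ by a one-pass local modification of the given legal sequence $S=((i_1,z_1),\dots,(i_T,z_T))$. I initialize a set $L\subseteq\calV$ of nodes already pinned at $0$ to $\emptyset$ and process $t=1,\dots,T$ in order: if $i_t\in L$, drop this step; otherwise, if $z_t\cdot x_{i_t}(t-1)\ge 0$ (a non-crossing update in $S$), copy $(i_t,z_t)$ into $S'$ and add $i_t$ to $L$ when $z_t=0$; and if $z_t\cdot x_{i_t}(t-1)<0$ (a crossing update in $S$), write $(i_t,0)$ into $S'$ and add $i_t$ to $L$. In the crossing case, Lemma~\ref{lemma:prop-pis} applied to $x_{i_t}(t-1),z_t\in P_{i_t}(x(t-1))$, which lie on opposite sides of $\theta=0$, yields $0\in P_{i_t}(x(t-1))$, so the replacement update is Pareto-improving at the \emph{original} state. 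Every update written into $S'$ is non-crossing by construction.

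The key technical step is to show that each update of $S'$ is legal at the \emph{modified} state $x'(t-1)$, which generally differs from $x(t-1)$. A straightforward induction on $t$ shows $x'_k(t-1)=0$ for $k\in L$ and $x'_k(t-1)=x_k(t-1)$ otherwise. Fix any $j\notin L$ and a non-crossing $z\in P_j(x(t-1))$, and define $\phi_k(y):=|y-x_k(t-1)|-|y|$; direct expansion gives
\begin{align*}
\Csocial^j(z;x'(t-1))-\Csocial^j(x_j;x'(t-1))
=\bigl[\Csocial^j(z;x(t-1))-\Csocial^j(x_j;x(t-1))\bigr]+\sum_{k\in L}w_{jk}\bigl[\phi_k(x_j)-\phi_k(z)\bigr].
\end{align*}
The first bracket is $\le 0$ by legality in $S$. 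The piecewise-linear $\phi_k$ is non-increasing in $y$ when $x_k(t-1)>0$, non-decreasing when $x_k(t-1)<0$, and identically $0$ when $x_k(t-1)=0$. A short case analysis on the signs of $x_k(t-1)$, $x_j$, and $z$ (using $z\cdot x_j\ge 0$ and $|z|\le|x_j|$ from the Pareto condition) yields $\phi_k(z)\ge\phi_k(x_j)$ in every non-crossing situation; the same bound specialized to $z=0$ handles the replacement updates inserted at crossings. The cognitive-cost condition is preserved trivially, since $\Ccog^j$ depends only on $x_j$.

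Next, I verify that $x'(T)$ is the consensus on $0$. Every node ever added to $L$ is pinned at $0$ in $S'$ thereafter, so contributes $0$ at time $T$. Any node $j$ that is never added to $L$ had, in $S$, no crossings and no updates to $0$; combined with $x_j(T)=0$ and the Pareto monotonicity of $\Ccog^j$, this forces $x_j(0)=0$ with no updates of $j$ at all, so $x'_j(T)=0$. Since $|S'|\le T$, the constructed sequence is finite.

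The step I anticipate to be the main obstacle is the sign-case verification of $\phi_k(z)\ge\phi_k(x_j)$. Although each case is a one-line computation, one must cleanly organize the four sign combinations of $(x_k(t-1),x_j)$ and observe two qualitatively different situations: when $x_j$ is on the side of $0$ opposite to $x_k(t-1)$, the values $\phi_k(x_j)$ and $\phi_k(z)$ both sit on a common flat plateau of $\phi_k$ and the inequality becomes an equality; when $x_j$ and $x_k(t-1)$ are on the same side, monotonicity of $\phi_k$ together with $\mathrm{sgn}(z)=\mathrm{sgn}(x_j)$ and $|z|\le|x_j|$ gives the direction needed.
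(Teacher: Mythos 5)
Your proposal is correct and follows essentially the same route as the paper's proof: reuse the activation order of $S$, replace each crossing update by an update to $0$ (justified via Lemma~\ref{lemma:prop-pis}), show inductively that the modified trajectory agrees with the original except that crossed nodes are pinned at $0$, and verify legality of each step by comparing social costs between the two trajectories via triangle inequalities. Your $\phi_k(y)=|y-x_k(t-1)|-|y|$ bookkeeping is just a clean repackaging of the paper's inequality $\sum_{j\in \calA_k(S)}w_{ij}|x_i(k)|\ge \sum_{j\in \calA_k(S)}w_{ij}\bigl(|x_i(k)-x_j(k)|-|x_j(k)|\bigr)$, so no substantive difference.
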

The proof is provided in Appendix~\ref{app:proof-lem:ex-wt-trv}. This lemma reveals an interesting fact: crossing updates are unnecessary for reaching a truthful consensus. Building on this result, we derive the following lemma to prepare for establishing a necessary condition for the existence of an update sequence that leads to consensus on the truth in finite time.

\begin{lemma}[Existence of update sequence to $\pm$1]\label{lem:ex-pm1}
Consider the PID opinion dynamics as in Definition~\ref{def:PID-op-dyn}. Let $\bO=\{-m,...,0,...,M\}$ with $m,M\ge1$ and $\theta=0$. For any $x(0)\in \bO^n$, if there exists a finite-time update sequence without crossing update, along which the opinion trajectory reaches consensus on $0$, then there also exists a finite update sequence without crossing update, along which the system reaches a state (not necessarily an equilibrium) such that 
every node in $\calV_{<0}(x(0))$ expresses the opinion $-1$, while every node in $\calV_{>0}(x(0))$ expresses the opinion $1$.
\end{lemma}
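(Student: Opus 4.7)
My plan is to build the desired sequence $S_{\mathrm{new}}$ by minimally editing the given no-crossing sequence $S_{\mathrm{orig}}=((i_1,z_1),\dots,(i_T,z_T))$ that drives $x(0)$ to the all-zero consensus. Specifically, after first removing any trivial self-updates (which do not change the state), I keep the same activation order and the same targets except when $z_t=0$: in that case I reset the target to $-1$ if $i_t\in \Vl{0}(x(0))$, to $+1$ if $i_t\in \Vg{0}(x(0))$, and leave it at $0$ if $i_t\in \calV_{=0}(x(0))$ (such nodes can only admit the target $0$ by cognitive preservation). By construction, no update in $S_{\mathrm{new}}$ crosses $\theta$, and at the terminal time $T$ the state is exactly as claimed: $-1$ for each originally-negative node, $+1$ for each originally-positive node, and $0$ for the rest.

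The substantive work is to verify that $S_{\mathrm{new}}$ is legal. I will establish, jointly by induction on $t$, legality of each update together with a structural invariant: for every node $j$, $x^{\mathrm{new}}_j(t)=x^{\mathrm{orig}}_j(t)$ whenever $x^{\mathrm{orig}}_j(t)\neq 0$, and $x^{\mathrm{new}}_j(t)=\operatorname{sign}(x_j(0))$ whenever $x^{\mathrm{orig}}_j(t)=0$ with $j\notin \calV_{=0}(x(0))$. The inductive step for the invariant is transparent: the two trajectories coincide on the coordinate of $j$ up to $j$'s finalizing activation, at which we substitute $\pm 1$ for $0$, and later activations of $j$ in $S_{\mathrm{orig}}$ can only target $0$ by cognitive preservation and have been eliminated by the no-trivial-updates assumption. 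The cognitive-cost half of legality is immediate because a nontrivial update that lands on $0$ in $S_{\mathrm{orig}}$ must start from an integer value with magnitude at least $1$, so resetting the target to $\pm 1$ still strictly decreases the cognitive cost.

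The social-cost half is where the key argument sits. Using the structural invariant, the coordinates on which $x^{\mathrm{new}}(t-1)$ and $x^{\mathrm{orig}}(t-1)$ disagree form two sets: $A_-$, the already-finalized originally-negative nodes (at $-1$ in $S_{\mathrm{new}}$ versus $0$ in $S_{\mathrm{orig}}$), and $A_+$, the already-finalized originally-positive nodes (at $+1$ versus $0$). A direct expansion of $\Csocial^{i_t}$ yields
\begin{align*}
\Csocial^{i_t}(u;x^{\mathrm{new}}(t-1))-\Csocial^{i_t}(u;x^{\mathrm{orig}}(t-1))=\sum_{j\in A_-}w_{i_tj}\bigl(|u+1|-|u|\bigr)+\sum_{j\in A_+}w_{i_tj}\bigl(|u-1|-|u|\bigr),
\end{align*}
and for any integer $u$ with $|u|\geq 1$ this discrepancy depends only on $\operatorname{sign}(u)$, because both $|u+1|-|u|$ and $|u-1|-|u|$ are then $\pm 1$, determined solely by that sign. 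Since $z_t^{\mathrm{new}}$ and $x^{\mathrm{new}}_{i_t}(t-1)$ always lie on the same side of $\theta$ and both have magnitude at least $1$, the \emph{difference} $\Csocial^{i_t}(z_t^{\mathrm{new}};x^{\mathrm{new}}(t-1))-\Csocial^{i_t}(x^{\mathrm{new}}_{i_t}(t-1);x^{\mathrm{new}}(t-1))$ equals the analogous difference evaluated on $x^{\mathrm{orig}}(t-1)$. When $z_t^{\mathrm{new}}=z_t^{\mathrm{orig}}$, legality of $S_{\mathrm{orig}}$ makes this difference nonpositive; when $z_t^{\mathrm{new}}=\pm 1$ replaced $z_t^{\mathrm{orig}}=0$, Lemma~\ref{lemma:prop-pis} applied inside $S_{\mathrm{orig}}$ (with $\alpha=x^{\mathrm{orig}}_{i_t}(t-1)$ and $\beta=0$) places $\pm 1$ in $P_{i_t}(x^{\mathrm{orig}}(t-1))$, again yielding a nonpositive difference.

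The main obstacle is precisely this invariance step: drift of the ambient state as more and more nodes finalize could, a priori, flip the sign of a later Pareto improvement, since social costs are globally reshuffled whenever a neighbor moves from $0$ to $\pm 1$. What rescues the argument is the pointwise behavior of $u\mapsto |u|$ on integer inputs with $|u|\ge 1$, which turns the discrepancy into a sign-only offset that cancels in any same-side comparison of candidate opinions. Once this cancellation is in place, finiteness of $S_{\mathrm{new}}$ is inherited from $S_{\mathrm{orig}}$, and the remaining steps — cognitive monotonicity and the Lemma~\ref{lemma:prop-pis} invocation — are routine, giving the claimed terminal state.
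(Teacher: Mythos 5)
Your proposal is correct and follows essentially the same route as the paper's proof: the same redirected sequence $S'$ (targets $0$ replaced by $\pm 1$ according to the side of $\theta$), the same inductive invariant on where the two trajectories disagree, and the same invocation of Lemma~\ref{lemma:prop-pis} to place $\pm 1$ in the original Pareto set. The only difference is in the final social-cost verification, where the paper chains triangle inequalities through the original-state comparison while you observe the exact cancellation identity (the state discrepancy contributes a sign-only offset for integer $|u|\ge 1$), which is a slightly cleaner way to transfer the inequality.
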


The proof is provided in Appendix~\ref{app:proof-lem:ex-pm1}. This lemma constructs an update sequence along which all nodes whose initial expressed opinions differ from $\theta$ move to $\pm 1$. Since each node’s distance from $0$ must be monotonically non-increasing along the PID opinion dynamics, the scenario in which all nodes' expressed opinions start in $\{-1,0,1\}$ can be regarded as a case where the set of available opinions is restricted to $\{-1,0,1\}$. The proof of Lemma~\ref{lem:ex-pm1} directly yields the following corollary, which will be used in the proof of the main results in this subsection.

\begin{Corollary}\label{coro:extreme theta}
Consider the PID opinion dynamics as in Definition~\ref{def:PID-op-dyn}. Let $\bO=\{0,1,...,M\}$ with $M\ge1$ and $\theta=0$. For any $x(0)\in \bO^n$, if there exists a finite-time update sequence along which the opinion trajectory reaches consensus on $0$, then there also exists a finite update sequence, along which the system reaches a state (not necessarily an equilibrium) such that every node in $\calV_{>0}(x(0))$ expresses the opinion $1$.    
\end{Corollary}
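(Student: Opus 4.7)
The plan is to deduce this corollary from Lemma~\ref{lem:ex-pm1} via a one-step enlargement of the opinion set. In the corollary's setting, $\bO=\{0,1,\ldots,M\}$ contains no negative integers, so $\calV_{<0}(x(0))=\varnothing$ is vacuous, and the ``positive half'' of the conclusion of Lemma~\ref{lem:ex-pm1} is precisely the statement we need. Hence the only real work is to show that a sequence produced by Lemma~\ref{lem:ex-pm1} can be re-interpreted as a legal sequence in the original opinion set $\bO$.

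Concretely, I would proceed in four steps. First, enlarge the opinion set to $\bO'=\{-1\}\cup\bO$, so that the hypotheses $m=1$ and $M\ge 1$ of Lemma~\ref{lem:ex-pm1} are satisfied. Second, observe that the finite update sequence given by the hypothesis, reaching consensus on $0$, is also a legal update sequence in the $\bO'$-setting---enlarging the option set only enlarges each Pareto-improvement set $P_i(\cdot)$---and that it is trivially crossing-update-free because it never invokes a negative opinion. Third, apply Lemma~\ref{lem:ex-pm1} to obtain a finite, crossing-update-free legal sequence (relative to $\bO'$) along which the system reaches a state with $x_i=1$ for every $i\in\calV_{>0}(x(0))$ and $x_i=-1$ for every $i\in\calV_{<0}(x(0))=\varnothing$, the latter being vacuous. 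Fourth, verify that this sequence never invokes the opinion $-1$, so that it is in fact a legal sequence under the original opinion set $\bO$.

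The main subtlety lies in the fourth step, and I would handle it by induction on the time index. A jump from a strictly positive opinion directly to $-1$ constitutes a crossing update by Definition~\ref{def:trv} and is therefore excluded by the crossing-update-free property; a jump from $0$ to $-1$ is also impossible, since it would strictly increase the cognitive cost $\Ccog^i$ from $0$ to $1$ and thus violate the Pareto-improvement condition. Because $x(0)\in\bO^n$ is non-negative, induction on the time step then shows that every intermediate state lies in $\bO^n$, so the produced sequence is indeed legal in the original setting and yields precisely the state demanded by the corollary.
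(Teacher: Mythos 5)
Your proof is correct. Note that the paper does not supply an explicit proof of this corollary: it simply asserts that the proof of Lemma~\ref{lem:ex-pm1} ``directly yields'' it, the intended reading being that one re-runs the construction of $S'$ from that proof in the restricted setting $\bO=\{0,\dots,M\}$ and observes that, since no node ever holds a negative opinion, the case $z'_k=-1$ never fires and the constructed sequence stays in $\bO^n$ while sending every node of $\calV_{>0}(x(0))$ to $1$. You instead derive the corollary from the \emph{statement} of Lemma~\ref{lem:ex-pm1}, by embedding the problem into the enlarged opinion set $\bO'=\{-1\}\cup\bO$ (so that the lemma's hypothesis $m\ge 1$ is met), noting that the given sequence is automatically legal and crossing-update-free there, and then showing by induction that the output sequence never invokes $-1$ --- a jump to $-1$ from a positive opinion would be a crossing update, and from $0$ it would raise $\Ccog^i$ --- so that it remains legal under the original $\bO$ because $P_i$ computed over $\bO$ is exactly $P_i$ computed over $\bO'$ intersected with $\bO$. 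The two arguments rest on the same underlying construction, but yours is the more self-contained derivation: it treats the lemma as a black box rather than requiring the reader to reopen its proof, at the small cost of the extra enlargement-and-containment step. All the steps you flag as subtle (monotonicity of $P_i$ under enlargement of $\bO$, vacuity of the negative side, and the exclusion of $-1$) are handled correctly.
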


The following lemma states that, if $\bO=\{-1,0,1\}$ and no one initially express the truth $\theta=0$, then convergence to consensus on $\theta$ fails with non-zero probability. Its proof is provided in Appendix~\ref{app:proof-lem:equi-pm10}.

\begin{lemma}[Equilibrium when $\bO=\{-1,0,1\}$]\label{lem:equi-pm10}
Consider the PID opinion dynamics given by Definition~\ref{def:PID-op-dyn}. Given that $\bO=\{-1,0,1\}$ and $\theta=0$, if  
 $\calV_{=0}(x(0))$ is empty, then there exists a finite update sequence leading the system to consensus on $z\neq 0$ or a non-consensus equilibrium.
\end{lemma}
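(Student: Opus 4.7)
The plan is to produce an explicit legal update sequence that terminates either in a $\pm 1$ consensus or in a non-consensus equilibrium. The guiding observation is that, for any node $i$ with $x_i\in\{-1,+1\}$, the cognitive cost satisfies $\Ccog^i(x_i;\theta)=1$; consequently, a flip from $-1$ to $+1$ (or vice versa) is a legal Pareto improvement whenever it does not increase $\Csocial^i$. This extra flexibility lets us avoid routing nodes through $0$, which is precisely what the canonical sequence in the proof of Theorem~\ref{th:as-ft-cv} does.

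If $x(0)$ is already a consensus, then since $\calV_{=0}(x(0))=\emptyset$ it must be a consensus on $-1$ or $+1$, and the empty update sequence is the desired one. Otherwise both $\calV_{=-1}(x(0))$ and $\calV_{=+1}(x(0))$ are non-empty, and I would consider the cohesive expansion $\E(\calV_{=+1}(x(0)))$ of Definition~\ref{def:CE}. Following a valid node-addition order for this expansion, I would flip each newly added node from $-1$ directly to $+1$. At the moment a node $i$ is added, the condition $\sum_{j\in \mathcal{M}_k} w_{ij}\ge\frac{1}{2}$ from Definition~\ref{def:CE} gives $\Csocial^i(+1)\le \Csocial^i(-1)$ while $\Ccog^i$ is unchanged, so each flip is a legal PID update. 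Call the resulting state $\tilde x$, so that $\calV_{=+1}(\tilde x)=\E(\calV_{=+1}(x(0)))$.

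If $\E(\calV_{=+1}(x(0)))=\calV$, then $\tilde x$ is consensus on $+1$ and we are done. Otherwise, a standard property of cohesive expansion (no node outside carries weight $\ge\frac{1}{2}$ into the final set) gives that $\calV\setminus \E(\calV_{=+1}(x(0)))$ is non-empty and strictly cohesive. I would then invoke Theorem~\ref{th:as-ft-cv} applied to $\tilde x$ to obtain a legal finite continuation from $\tilde x$ to some equilibrium $x^*$. The key invariance to establish is that every node in the strictly cohesive set $\calV\setminus \E(\calV_{=+1}(x(0)))$ stays at $-1$ throughout this continuation; I would verify this by a time induction mirroring the ``if'' direction of Theorem~\ref{th:equi}, arguing inductively that each such $i$ retains out-weight strictly greater than $\frac{1}{2}$ on nodes currently at $-1$, which forces $P_i(x(t))=\{-1\}$ at every step. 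Consequently $\calV_{=-1}(x^*)\supseteq \calV\setminus \E(\calV_{=+1}(x(0)))$ is non-empty, which rules out both consensus on $0$ and consensus on $+1$; hence $x^*$ is either consensus on $-1$ or a non-consensus equilibrium.

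The main obstacle I anticipate is verifying the invariance carefully in the presence of the $0$-valued nodes that Theorem~\ref{th:as-ft-cv}'s continuation may create. The social-cost comparison must be redone once some neighbors of $i$ have drifted to $0$, but the argument still goes through because strict cohesiveness of $\calV\setminus \E(\calV_{=+1}(x(0)))$ guarantees $\sum_{j:\,x_j(t)=-1} w_{ij}>\frac{1}{2}$ for every $i$ in the invariant set at every time $t$, which in turn makes both $\Csocial^i(0)$ and $\Csocial^i(+1)$ strictly larger than $\Csocial^i(-1)$, so $i$ has no Pareto-improving deviation.
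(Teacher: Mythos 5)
Your proposal is correct and takes essentially the same route as the paper's proof: the paper splits on whether $\calV_{=-1}(x(0))$ contains a non-empty strictly cohesive set (in which case that set is frozen at $-1$ by exactly your social-cost comparison), and otherwise shows $\E\big(\calV_{=1}(x(0))\big)=\calV$ and flips every $-1$ node directly to $+1$ along the expansion order. You merely reorganize the case split by performing the cohesive expansion first and observing that its complement, when non-empty, is strictly cohesive and hence invariant; the two decompositions are logically equivalent.
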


The theorem below generalizes the results in Lemma~\ref{lem:equi-pm10} from $\bO=\{-1,0,1\}$ to any arbitrary finite set $\bO$.

\begin{theorem}[Truth emergence is not guaranteed]\label{th:nc-as-cv-cs-th}
Consider the PID opinion dynamics given by Definition~\ref{def:PID-op-dyn}, with $\theta=0$. If $\calV_{=0}(x(0))$ is empty, then there exists an update sequence along which the system achieves non-truth consensus or non-consensus equilibrium in finite time.
\end{theorem}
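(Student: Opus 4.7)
The plan is to reduce the general statement to the ternary case already resolved by Lemma~\ref{lem:equi-pm10}. Given any $x(0)\in\bO^n$ with $\calV_{=0}(x(0))=\emptyset$, I would split into two cases: either no legal finite update sequence from $x(0)$ reaches consensus on $0$ at all, in which case the sequence produced by Theorem~\ref{th:as-ft-cv} must terminate at a non-$0$ equilibrium; or such a consensus-on-$0$ sequence does exist, in which case I first drive the system onto $\{-1,+1\}^n$ using Lemmas~\ref{lem:ex-wt-trv} and~\ref{lem:ex-pm1}, and then invoke Lemma~\ref{lem:equi-pm10} on the resulting restricted dynamics.

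In the first case the construction in the proof of Theorem~\ref{th:as-ft-cv} furnishes, for every initial state, a legal finite update sequence ending in some equilibrium $x^*$; by the case hypothesis $x^*\neq 0\vect{1}_n$, so $x^*$ is either a non-$0$ consensus or a non-consensus equilibrium, and we are done.

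In the second case, pick any legal finite update sequence from $x(0)$ reaching consensus on $0$. I would first replace it by a no-crossing one via Lemma~\ref{lem:ex-wt-trv}, and then invoke Lemma~\ref{lem:ex-pm1} to obtain a legal no-crossing finite prefix whose end-state $x(T)$ satisfies $x_i(T)=-1$ on $\calV_{<0}(x(0))$ and $x_i(T)=+1$ on $\calV_{>0}(x(0))$. Since $\calV_{=0}(x(0))$ is empty, $x(T)$ lies in $\{-1,+1\}^n$ and in particular $\calV_{=0}(x(T))=\emptyset$. From $x(T)$ onward, any Pareto-improving update must leave $\Ccog$ non-increasing, so a node currently at $\pm 1$ can never move to an opinion with $|z|\ge 2$; the legal continuations from $x(T)$ under the original dynamics on $\bO$ therefore coincide with those under the PID dynamics restricted to the opinion set $\{-1,0,+1\}$. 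Applying Lemma~\ref{lem:equi-pm10} with initial state $x(T)$ yields a legal finite tail ending either in consensus on some $z\neq 0$ or in a non-consensus equilibrium, and concatenating this tail with the prefix up to $x(T)$ produces the required sequence.

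The main delicate point is the tail argument: one must verify that invoking Lemma~\ref{lem:equi-pm10} at state $x(T)$ is legitimate even though the ambient opinion set is $\bO\supseteq\{-1,0,+1\}$. This reduces to the monotonicity of $\Ccog$ along any Pareto improvement, which rules out excursions back to $|z|\ge 2$ once a node sits at $\pm 1$. With that observation, every update that is legal in the ternary system is automatically legal in the original one, so the two sequences concatenate cleanly and the construction closes.
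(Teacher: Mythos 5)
Your argument reproduces the paper's proof almost step for step: the dichotomy on whether any legal finite sequence reaches consensus on $0$; the reduction via Lemma~\ref{lem:ex-wt-trv} and Lemma~\ref{lem:ex-pm1} to a state in $\{-1,+1\}^n$; the observation that, once every node sits in $\{-1,0,1\}$, the cognitive cost excludes any $z$ with $|z|\ge 2$ from $P_i(\cdot)$, so the Pareto-improvement sets computed over $\bO$ and over $\{-1,0,1\}$ coincide; and the final appeal to Lemma~\ref{lem:equi-pm10}. The ``delicate point'' you single out is exactly the justification the paper uses, and your handling of it is correct.

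The one genuine gap is that you never treat the case in which $\theta=0$ is an extreme element of $\bO$, i.e., $\bO=\{0,1,\dots,M\}$ or $\bO=\{-m,\dots,0\}$. Both Lemma~\ref{lem:ex-wt-trv} and Lemma~\ref{lem:ex-pm1} are stated under the hypothesis $\bO=\{-m,\dots,0,\dots,M\}$ with $m,M\ge 1$, so your second branch cannot invoke them when $0$ is an endpoint of $\bO$; and that sub-case is not vacuous, since consensus on $0$ can be reachable from an all-positive initial state (e.g., two nodes with $w_{12}=w_{21}=1$, $\bO=\{0,1,2\}$, $x(0)=(2,1)$). The paper handles this separately via Corollary~\ref{coro:extreme theta}: when, say, all opinions are non-negative and $\calV_{=0}(x(0))=\emptyset$, one drives every node of $\calV_{>0}(x(0))=\calV$ to $1$, which is already a false consensus, with no need for Lemma~\ref{lem:equi-pm10}. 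Adding this case (or extending the two lemmas to allow $m=0$ or $M=0$) closes your argument; everything else is sound.
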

\begin{proof}
We distinguish two cases depending on whether $\theta = 0$ is an extreme opinion. Here, saying that $\theta = 0$ is an extreme opinion means that $\bO$ contains only non-negative or only non-positive opinions.
    
We first consider the case where $\theta = 0$ is not extreme, i.e., $\bO = \{-m, \ldots, 0, \ldots, M\}$ with $m, M \ge 1$. The analysis of this case builds on Theoreom~\ref{th:as-ft-cv}, Lemmas~\ref{lem:ex-wt-trv},~\ref{lem:ex-pm1}, and~\ref{lem:equi-pm10} via the following steps:
\begin{enumerate}
    \item By Theorem~\ref{th:as-ft-cv}, there exists a finite update sequence that drives the system to an equilibrium. If the resulting equilibrium is a non-truth consensus or a non-consensus equilibrium, the claim is already established.
    \item Otherwise, suppose there exists an opinion-update sequence along which the system reaches consensus on $0$. Then, by Lemma~\ref{lem:ex-wt-trv}, there exists a finite update sequence with no crossing updates along which the opinion trajectory reaches consensus on $\theta = 0$. Moreover, by Lemma~\ref{lem:ex-pm1} and the assumption that $\calV_{=0}(x(0))$ is empty, there exists a finite update sequence $S$ that moves all nodes to states in $\{-1, 1\}$. 
    \item Execute such an update sequence $S$. Denote by $T$ the termination time of $S$. Since $\Ccog^i(x_i(t))$ is non-increasing along any legal update sequence, each node’s distance to $0$ does not increase. As a result, after time $T$, the available opinion set is confined to $\{-1, 0, 1\}$. 
    \item By Lemma~\ref{lem:equi-pm10}, there then exists a finite update sequence that drives the system from $x(T)$ to either consensus on some $z \neq 0$ or to a non-consensus equilibrium.
\end{enumerate}

Secondly, consider the case where $\theta = 0$ is extreme. Without loss of generality, suppose all elements in $\bO$ are non-negative. We again assume that there exists an update sequence $S$ that drives the system to consensus on $\theta =0$. From Corollary~\ref{coro:extreme theta} we know that there also exists a finite update sequence $S'$ along which every $i \in \calV_{>0}(x(0))$ moves to $1$. Combining this with the fact that $\calV_{>0}(x(0)) = \calV$, we conclude that, along the update sequence $S'$, which will be triggered with non-zero probability, the system reaches consensus on $1$ in finite time. This completes the proof. 
\end{proof}

\begin{remark}
Theorem~\ref{th:nc-as-cv-cs-th} can be equivalently interpreted as follows: “$\calV_{=\theta}(x(0))$ is nonempty” is a necessary condition for almost-sure convergence to consensus on $\theta$. This result conveys a clear message that, if no one initially speaks out the truth, the emergence of truth is not guaranteed under any influence network structure.
\end{remark}

Next, we establish a sufficient condition on the network structure and the initial state for the existence of an update sequence leading to consensus on $\theta$.

\begin{theorem}[Possible Truth Emergence] Consider the PID opinion dynamics given by Definition~\ref{def:PID-op-dyn}. For any $x(0)\in \bO^n$, if neither $\calV_{<0}(x(0))$ nor $\calV_{>0}(x(0))$ contains a non-empty strictly cohesive set, then there exists a finite opinion-update sequence along which the system reaches consensus on $\theta=0$.
\end{theorem}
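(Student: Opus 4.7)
My plan is to exhibit, for any $x(0)\in\bO^n$ satisfying the hypothesis, an explicit legal update sequence that reaches the all-zero vector in finite time. I would proceed in two consecutive phases, neither of which uses a crossing update: in phase one, the originally-positive nodes are ``peeled'' down to $0$ one unit at a time; in phase two, the originally-negative nodes are pushed up to $0$ symmetrically. Because no update crosses $0$, the invariants $\calV_{>0}(x(t))\subseteq\calV_{>0}(x(0))$ during phase one and $\calV_{<0}(x(t))\subseteq\calV_{<0}(x(0))$ during phase two are maintained automatically, which is what lets me invoke the hypothesis at every step. Nodes that have already reached $0$ stay there forever since $0$ is the unique minimizer of $\Ccog^i$, so the two phases do not interfere.

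In phase one I would iterate the following local step while $M(t):=\max_i x_i(t)>0$. Let $\mathcal{M}(t)=\{i\in\calV:\,x_i(t)=M(t)\}$. By the non-crossing invariant $\mathcal{M}(t)\subseteq\calV_{>0}(x(0))$, and by hypothesis this nonempty set cannot be strictly cohesive, so there exists $i\in\mathcal{M}(t)$ with $\sum_{j\in\mathcal{M}(t)} w_{ij}\le 1/2$. Because every other node has opinion strictly below $M(t)$, a direct bookkeeping of absolute differences gives $\Csocial^i(M(t)-1;x(t))-\Csocial^i(M(t);x(t))=2\sum_{j\in\mathcal{M}(t)}w_{ij}-1\le 0$, while $\Ccog^i(M(t)-1;0)<\Ccog^i(M(t);0)$; hence $M(t)-1\in P_i(x(t))$ and moving $i$ to $M(t)-1$ is a legal Pareto-improvement update. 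Finite-time termination of phase one then follows from the monovariant $\Phi(t)=\sum_{i:\,x_i(t)>0}x_i(t)$, which is a nonnegative integer that strictly decreases by $1$ at each such update.

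Phase two is the mirror image: apply the same extremal-peeling argument to $m(t):=\min_i x_i(t)$, invoking the ``no nonempty strictly cohesive subset of $\calV_{<0}(x(0))$'' half of the hypothesis, until every originally-negative node also reaches $0$. The already-zeroed nodes remain inert throughout, so concatenating the two phases yields a legal update sequence ending at the all-zero state, i.e., consensus on $\theta=0$. The one subtle step, and the one I would write out most carefully, is the local Pareto-improvement bookkeeping for the extremal node: the weight-half bound supplied by non-cohesiveness is exactly what cancels the penalty accrued against the other top-level nodes, leaving $\Csocial^i$ weakly decreased while $\Ccog^i$ strictly decreases. Once this local inequality is in hand, the choice of the potential $\Phi$, the non-crossing invariant, and the symmetric treatment of the negative side are all routine.
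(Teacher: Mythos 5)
Your proof is correct, but it takes a genuinely different route from the paper's. The paper first shows that, because $\calV_{<0}(x(0))$ contains no non-empty strictly cohesive subset, the cohesive expansion $\E\big(\calV_{\ge 0}(x(0))\big)$ absorbs all of $\calV_{<0}(x(0))$; the resulting node-addition order $(i_1,\dots,i_T)$ is then converted into an update sequence in which each $i_t$ jumps \emph{directly} to $0$ in a single step, legality following from the fact that at that moment at least half of $i_t$'s weight sits on nodes with opinion $\ge 0$, so $\Csocial^{i_t}$ is non-increasing over the whole interval up to $0$. You instead peel the \emph{extremal level set} $\mathcal{M}(t)=\calV_{=M(t)}(x(t))$ one unit at a time, using only the fact that this particular non-empty subset of $\calV_{>0}(x(0))$ cannot be strictly cohesive to find a node $i$ with $\sum_{j\in\mathcal{M}(t)}w_{ij}\le \tfrac12$, and your one-unit identity $\Csocial^i(M(t)-1;x(t))-\Csocial^i(M(t);x(t))=2\sum_{j\in\mathcal{M}(t)}w_{ij}-1\le 0$ is exactly right (as is the strict decrease of $\Ccog^i$ and the termination argument via the integer potential $\Phi$). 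What each approach buys: the paper's construction is shorter as an update sequence (one update per node rather than one per unit of opinion distance) and ties directly into the cohesive-expansion machinery reused throughout the paper, whereas yours is more elementary and self-contained, needing neither the expansion operation nor its uniqueness, at the cost of a longer sequence and an explicit monovariant. One small prose slip worth fixing: ``every other node has opinion strictly below $M(t)$'' should read ``every node outside $\mathcal{M}(t)$''; your displayed formula already accounts for the ties correctly, so nothing substantive changes.
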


\begin{proof}
For simplicity of notations, let
\begin{align*}
    \calV_0^{-} = \calV_{<0}(x(0))\quad \text{and}\quad \calV_0^+ = \calV_{>0}(x(0)).
\end{align*}
Without loss of generality, suppose $\calV_0^-$ is non-empty. We now construct a finite opinion-update sequence that drives the opinions of all the nodes in $\calV_0^-$ to $\theta=0$. 

We first prove that $\calV_0^-\subset\E\big( \calV_{\ge 0}(x(0)) \big)$. Since $\calV_0^-$ contains no non-empty strictly cohesive set, itself is not strictly cohesive. Therefore, there exists $i_1\in \calV_0^-$ such that 
\begin{align*}
    \sum_{j\in \calV_{\ge 0}(x(0))} w_{i_1 j}\ge \frac{1}{2}
\end{align*}
In turn, since $\calV_0^-\setminus \{i_1\}$ is not strictly cohesive either, there exists $i_2\in \calV_0^-\setminus\{i_1\}$ such that 
\begin{align*}
    \sum_{j\in \calV_{\ge 0}(x(0))\cup\{i_1\}} w_{i_2j}\ge \frac{1}{2}
\end{align*}
Since $\calV_0^-$ contains no non-empty strictly cohesive set, as the above argument continues, we will eventually obtain a node-addition sequence $(i_1,i_2,\dots,i_T)$ such that 
\begin{align*}
    \calV_0^-=\{i_1,\dots, i_T\}\text{ and }\E\big( \calV_{\ge 0}(x(0)) \big)=\calV_{\ge 0}(x(0))\cup \calV_0^-.
\end{align*}
As a result, for any $t\in \{1,\dots, T\}$, 
\begin{align*}
    \sum_{j\in \calV_{\ge 0}(x(0))\cup \{i_1,\dots,i_{t-1}\}}w_{i_t j}\ge \frac{1}{2}.
\end{align*}
According to the Pareto-improvement mechanism, the above inequality in turn implies that
\begin{align*}
    \Csocial^i\big( 0;x(t-1) \big)\le \Csocial^i\big( x_{i_t}(t-1);x(t-1)  \big).
\end{align*}
In addition, $\Ccog^i(0)\le \Ccog^i(x_{i_t}(t-1))$ always holds. Therefore $(i_t,0)$ constitutes a legal opinion update at time $t$. That is, the node-addition sequence $(i_1,\dots, i_T)$ for $\E\big( \calV_{\ge 0}(x(0)) \big)$ in fact induces a legal opinion-update sequence for the PID opinion dynamics:
\begin{align*}
    S=\Big( (i_1,0),\,(i_2,0),\dots,\, (i_T,0) \Big),
\end{align*}
after which we have $x_i(T)=0$ for any $i\in \calV_0^-$. To this point, all the nodes with initial opinions below 0 have updated their opinions to 0.

Now we deal with the nodes in $\calV_0^+$. Since no node in $\calV_0^+$ has ever updated its opinion up to time $T$, we have $\calV_{>0}(x(T))=\calV_0^+$. Moreover, since $\calV_0^+$ does NOT contain any non-empty strictly cohesive set, following the similar treatment above for $\calV_0^-$ and starting from time $T+1$, we can construct a finite and legal opinion-update sequence along which the opinions of all the nodes in $\calV_0^+$ are updated to $0$. As a result, the entire system reaches consensus on $0$ along a finite and legal opinion-update sequence. This concludes the proof.  
\end{proof}

\subsection{Guiding the system to truth via initial seeding}

In this subsection, we consider the following questions: If we choose a subset of nodes as ``initial seeds'' and let their initial opinions be $\theta$, which nodes should be chosen so that the entire system is guaranteed to reach consensus on truth? How is the choice of initial seeds determined by the influence network structure? The following theorem answers the above questions by providing a necessary and sufficient graph-theoretic condition for a seed-selection strategy to guarantee almost-sure consensus on truth.

\begin{theorem}[Consensus on truth via seeding] \label{th:cs-at-th} 
Consider the PID opinion dynamics given by Definition~\ref{def:PID-op-dyn}. Suppose a subset of nodes, denoted by $\I$, start with the opinion $\theta$. The following two statements are equivalent:
\begin{enumerate}[label=\arabic*)]
\item The system almost surely converges to consensus on $\theta$, regardless of the initial opinions of the nodes in $\V\setminus \I$.
\item Every strictly cohesive set in $\G(W)$ contains at least one node in $\I$.
\end{enumerate}
\end{theorem}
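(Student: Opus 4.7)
The plan is to prove the two implications separately, using the equilibrium characterization in Theorem~\ref{th:equi} and the almost-sure finite-time convergence in Theorem~\ref{th:as-ft-cv} as the main tools. The key observation, used in both directions, is that every seed node $i\in \I$ is \emph{pinned} at $\theta$ throughout the dynamics: since $\Ccog^i(\theta;\theta)=0$, any deviation strictly increases the cognitive cost and is forbidden by the Pareto-improvement rule, so $x_i(t)\equiv \theta$ for every $i\in \I$ and every $t\ge 0$.

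For the direction 2) $\Rightarrow$ 1), by Theorem~\ref{th:as-ft-cv} the trajectory almost surely reaches some equilibrium $x^*$ in finite time, so it suffices to show that every reachable equilibrium equals $\theta\vect{1}_n$. I would argue by contradiction: suppose $x^*\neq \theta\vect{1}_n$, so some node satisfies $x_i^*\neq \theta$; WLOG $x_i^*<\theta$. Theorem~\ref{th:equi} applied with $z=x_i^*$ then forces $\Vle{x_i^*}(x^*)$ to be strictly cohesive. But this set excludes every seed node, since all of them sit at $\theta>x_i^*$, contradicting Statement 2). The case $x_i^*>\theta$ is fully symmetric via $\Vge{x_i^*}(x^*)$.

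For the direction 1) $\Rightarrow$ 2), I would prove the contrapositive by explicit construction. Suppose some strictly cohesive set $\mathcal{M}$ satisfies $\mathcal{M}\cap \I=\emptyset$. Since $|\bO|\ge 2$, pick $z\in \bO\setminus\{\theta\}$; WLOG $z<\theta$ (the case $z>\theta$ is symmetric and covers the edge case in which $\theta$ is the minimum of $\bO$). Define the initial profile by $x_i(0)=z$ for $i\in \mathcal{M}$ and $x_i(0)=\theta$ for $i\in \V\setminus \mathcal{M}$; since $\I\subseteq \V\setminus \mathcal{M}$, this is consistent with all seed nodes starting at $\theta$. A direct check via Theorem~\ref{th:equi} shows that $x(0)$ is already an equilibrium: for any $z'\in \bO$ with $z'<\theta$, the sublevel set $\Vle{z'}(x(0))$ is either empty or exactly $\mathcal{M}$, both strictly cohesive; for any $z'>\theta$, $\Vge{z'}(x(0))$ is empty. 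The system is thus frozen at a non-truth equilibrium from this initial condition, so almost-sure consensus on $\theta$ fails, contradicting Statement 1).

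I do not anticipate any substantial obstacle: the argument is a short composition of the equilibrium characterization, the almost-sure convergence theorem, and the pinning of seed nodes. The only minor subtlety is the symmetric treatment required when $\theta$ is an extreme point of $\bO$, which is handled by swapping the roles of sublevel and superlevel sets throughout the construction.
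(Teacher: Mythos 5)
Your proposal is correct and follows essentially the same route as the paper: pinning of the seed nodes at $\theta$, the equilibrium characterization of Theorem~\ref{th:equi} plus almost-sure convergence for 2)~$\Rightarrow$~1), and the frozen two-block initial condition for the contrapositive of 1)~$\Rightarrow$~2). The only step to make explicit in 2)~$\Rightarrow$~1) is that, before applying the strictly-cohesive branch of Theorem~\ref{th:equi} to a reachable equilibrium $x^*\neq\theta\vect{1}_n$, you must rule out the false-consensus branch; this follows because $\V$ itself is strictly cohesive, so $\I\neq\emptyset$ and at least one node is pinned at $\theta$, exactly as the paper notes.
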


\begin{proof}
We first point out a simple fact that, according to the Pareto-improvement mechanism, once a node's opinion is $\theta$, its opinion will never change along the PID opinion dynamics.

\emph{``2) $\Rightarrow$ 1)'':} Since $\calV$ is a strictly cohesive set, there exists at least one node in $\mathcal{I}$, i.e., one node who will stick to the opinion $\theta$ throughout the dynamics. Therefore, the system can only reach either consensus on $\theta$ or a non-consensus equilibrium. Suppose the system reaches a non-consensus equilibrium $x^*$. Then there exists $j\in\calV$ such that $x^*_j\neq \theta$. Without loss of generality, assume $x^*_j<\theta$. According to statement 2) in Theorem~\ref{th:equi}, $\calV_{\le x^*_j}$ is a strictly cohesive set and thereby contains a node in $\I$, whose opinion is fixed at $\theta$. This contradicts the definition of $\calV_{\le x^*_j}$ and thus conclude the proof for ``2) $\Rightarrow$ 1)''. 
      
\emph{``1) $\Rightarrow$ 2)'':}
We prove its contrapositive. Assume that there exists a strictly cohesive set $\calV_0$ such that no node in $\calV_0$ is in $\mathcal{I}$. We construct an initial condition as follows: All the nodes in $\calV_0$ are initially at $x_0 \in \bO \setminus{\{\theta\}}$, while other nodes are initially at $\theta$. According to Theorem~\ref{th:equi}, such an initial condition is already a non-consensus equilibrium. This concludes the proof for ``1) $\Rightarrow$ 2)''. 
        
\end{proof}
   
Theorem~\ref{th:cs-at-th} highlights the crucial role of strictly cohesive sets in enabling consensus on truth via initial seeding. However, given an arbitrary influence network, identifying all strictly cohesive sets is a computationally demanding task, reflecting the inherent complexity of social systems.

\section{numerical study}

Since the PID opinion dynamics assumes the existence of an objective truth $\theta$, the extent to which individuals reach consensus on $\theta$ reflects the system’s level of ``collective intelligence''. In this section, we perform extensive numerical simulations to explore the role of influence network structure in promoting collective intelligence and mitigating pluralistic ignorance. We focus on two basic structural metrics: link density and clustering coefficient. The former captures how well-connected the network is, while the latter indicates the degree of local clustering versus global mixing.

\begin{figure*}[t]  
    \centering
    \includegraphics[width=1\linewidth]{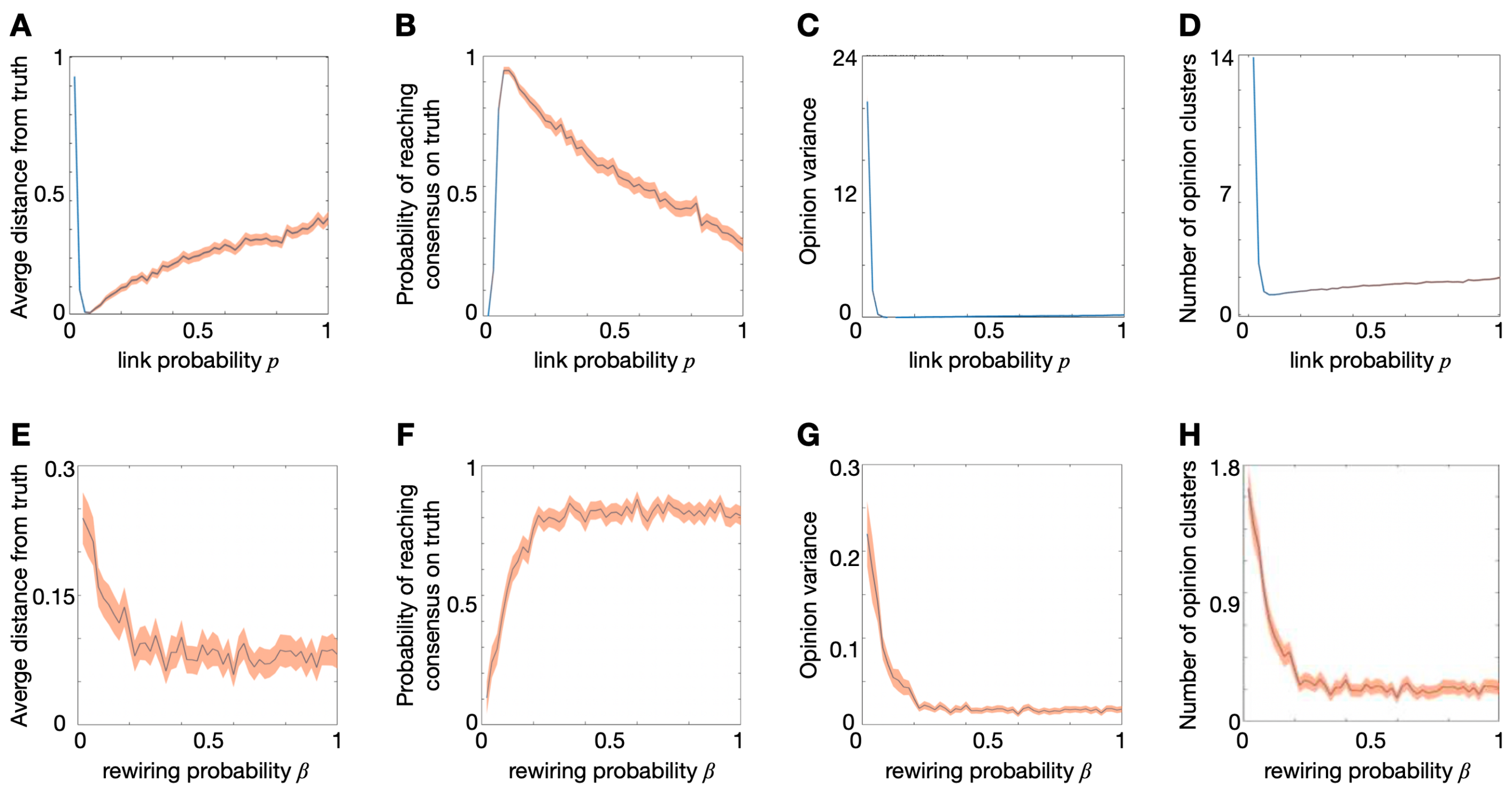}  
    \caption{How link density and clustering coefficient affect the system's ability of reaching consensus on truth. Panel~\textbf{A}-\textbf{D} correspond to the results of simulations on E-R random networks, showing how link probability $p$ affects the average opinion distance to truth, probability of reaching consensus on truth, variance of steady-state opinion, and number of opinion clusters at the steady state, respectively. Panel~\textbf{E}-\textbf{H} correspond to the results of simulations on W-S small-world networks, showing how rewiring probability $\beta$ affects the average opinion distance to truth, probability of reaching consensus on truth, variance of steady-state opinion, and number of opinion clusters at the steady state, respectively. In each plot, the color-shaded regions demarcate the 95\% confidence interval for the value computed based on 1000 independent simulations under the same $p$. It seems that there is no color-shaded region in Panel~\textbf{C} and ~\textbf{D} because the 95\% confidence intervals are too narrow.}
    \label{fig:simulations}
\end{figure*}

\emph{Effects of link density:} Simulations of the PID opinion dynamics on Erd\"{o}s–R\'{e}nyi (E–R) random graphs~\cite{PE-AR:59} reveal non-trivial patterns on how link density affects collective intelligence. We simulate the PID opinion dynamics on E-R networks with 100 nodes. In E-R networks, each directed pair of nodes $(i,j)$ indepednently form a directed link with probability $p$. This link probability $p$ controls the link densisty of the resulting network. In our simulations, we let $p$ take values from the set $\{0.01,0.02,\dots,0.99,1\}$. For each value of $p$, 1000 independent simlations are conducted. For each simulation, an E-R network with the corresponding link probability $p$ is randomly generated. Let $\bO=\{1,2,\dots,30\}$ and unfirmly randomly designate one opinion in $\bO$ as the truth $\theta$. We collect the steady states of the 1000 simulations for each value of $p$. 

\emph{Effects of link density:}
Simulations of the PID opinion dynamics on Erd\"{o}s–R\'{e}nyi (E–R) random graphs~\cite{PE-AR:59} reveal nontrivial patterns in how link density influences collective intelligence. We simulate the PID opinion dynamics on E–R networks with 100 nodes. In such networks, each directed pair of nodes $(i, j)$ independently forms a directed link with probability $p$, which determines the network’s link density. In our simulations, $p$ takes values from $\{0.01, 0.02, \dots, 0.99, 1\}$. For each $p$, 1000 independent simulations are conducted, each on a randomly generated E–R network with the corresponding link probability. Let $\bO = \{1, 2, \dots, 30\}$, and uniformly at random designate one opinion in $\bO$ as the truth $\theta$. We then collect the steady states from the 1000 simulations for each value of $p$.

Statistical analysis shows that the average distance between individuals’ steady-state opinions and the truth $\theta$ first decreases rapidly as $p$ increases, reaching a minimum at $p = 0.08$, and then increases again (see Fig.\ref{fig:simulations}\textbf{A}). This trend is corroborated by the probability of achieving consensus on the truth, which also peaks at $p = 0.08$ (Fig.\ref{fig:simulations}\textbf{B}). Moreover, the variation in steady-state opinions quickly diminishes with increasing $p$ and remains near zero for $p \ge 0.08$, indicating that the system (or most individuals) reaches consensus (Fig.\ref{fig:simulations}\textbf{C}). This pattern is further confirmed by the number of opinion clusters at steady state (Fig.\ref{fig:simulations}\textbf{D}).

\emph{Effects of clustering coefficient:}
To examine how clustering influences collective intelligence, we perform additional simulations on Watts–Strogatz (W–S) small-world networks~\cite{DJW-SHS:98} with 100 nodes. In W–S networks, the clustering coefficient is controlled by the rewiring probability $\beta$: smaller $\beta$ yields higher clustering. In our simulations, $\beta$ takes values from $\{0.01, 0.02, \dots, 0.99, 1\}$. For each $\beta$, 1000 independent simulations are conducted, each on a randomly generated W–S network. The initialization procedure and the designation of the true opinion $\theta$ are the same as in the E–R case. As shown in Fig.~\ref{fig:simulations}\textbf{E}–\textbf{H}, increasing $\beta$ (i.e., reducing clustering) leads to fewer opinion clusters at steady state and a higher likelihood of consensus on the truth. This effect is most pronounced for $\beta \in (0, 0.2)$, beyond which the steady-state behavior becomes largely insensitive to $\beta$.

\emph{Take-home messages:}
The simulation results convey clear and sociologically meaningful insights: a moderately sparse and weakly clustered influence network best mitigates pluralistic ignorance and promotes consensus on the truth. As clustering increases, the system tends to fragment into multiple opinion clusters; as connectivity grows too dense, the system becomes prone to false consensus—reaching agreement on an opinion farther from the truth.

\section{Conclusions}

This paper proposes a multi-objective optimization framework for opinion dynamics. Theoretical and numerical studies of the proposed model yield several sociologically meaningful insights. The proposed model offers a parsimonious and mechanistic explanation for pluralistic ignorance and underscores the critical influence of fine-grained network structures, particularly strictly cohesive sets, on system dynamics. We also show that no network topology can guarantee consensus on truth if no individual initially expresses it, while moderately sparse and weakly clustered networks most effectively mitigate pluralistic ignorance.

This work opens several promising directions. Within opinion dynamics, one may extend the analysis to multi-dimensional settings: If individuals initially access different dimensions of the truth, under what conditions can the system aggregate this information? More broadly, introducing the notion of a “Pareto–Nash equilibrium” in games with multi-dimensional utilities may enrich classical results and provide new insights into complex socio-economic phenomena.

\begin{appendices}
{

\section{Proof of Proposition~\ref{prop:traverse}}\label{app:proof-traverse}

Let $\mathcal{T} (W)$ be the directed unweighted graph such that an edge from $i$ to $j$ exists if and only if $w_{ij}>\frac{1}{2}$. We need to prove that $\calV$ is the only cohesive set in $\calV$ if and only if edges in $\mathcal{T}(W)$ form a directed cycle that traverses all the nodes in $\calV$. 

We first prove the ``if'' part. Suppose $\mathcal{T}(W)$ is a directed cycle and there exists a non-empty cohesive set $\calV_1 \subset \calV$. Let $i_1\in \calV_1$ and there is an edge from $i_1$ to some $i_2\neq i_1$ in $\mathcal{T}(W)$. By the definition of $\mathcal{T}(W)$, we have $w_{i_1i_2} >\frac{1}{2}$. Suppose $i_2 \notin \calV_1$, then $\sum_{j\in \calV_1} w_{i_1j}\leq 1-w_{i_1i_2}< \frac{1}{2}$, which contradicts that $\calV_1$ is a cohesive set. Therefore, $i_2$ must belongs to $\calV_1$. Analogously, there is an edge from $i_2$ to some $i_{3}\notin\{i_1,i_2\}$ in $\mathcal{T}(W)$ and then $i_{3} \in \calV_1$. 
Continuing in this way, we have $\{i_1,i_2,..,i_n\}\subseteq \calV_1 $. Since $\mathcal{T}(W)$ is a directed cycle traversing all the nodes in $\calV$, we have $\{i_1,i_2,...,i_n\}=\calV$, which contradicts $\calV_1 \subset \calV$.

Now we prove the ``only if'' part. Suppose $\calV$ is the only cohesive set in $\calV$. For any $i\in \calV$, divide $\calV$ into $\{i\}$ and $\calV \setminus \{i\}$. $\calV \setminus \{i\}$ is not cohesive, i.e., there exsits $j\in \calV \setminus \{i\}$ such that $\sum_{k\in \calV \setminus \{i\}}w_{jk}< \frac{1}{2}$. That is, $w_{ji}>\frac{1}{2}$, which implies that there is an edge from $j$ to $i$ in $\mathcal{T}(W)$. By the arbitrariness of $i$, we have that any node in $\mathcal{T}(W)$ has at least one in-neighbor. Besides, any node $i$ in $\mathcal{T}(W)$ has at most one out-neighbor.(Otherwise, there exists $j_1, j_2 \in \calV$ that $w_{ij_1}>\frac{1}{2}$ and $w_{ij_2} > \frac{1}{2}$, which contradicts $\sum _{k\in \calV}w_{ik}=1$.) Since the total number of in-neighbors for all nodes is equal to the total number of out-neighbors in a graph, all nodes in $\mathcal{T}(W)$ have exactly one in-neighbor and one out-neighbor. This implies that $\mathcal{T}(W)$ is composed of disjoint cycles.  If $i_1,..,i_l$ form a cycle in $\mathcal{T}(W)$, by the definition of $\mathcal{T}(W)$, we have $\sum_{k=1}^l w_{i_mi_k}> \frac{1}{2}$ for any $m\in \{1,...,l\}$, and thereby $\{i_1,..,i_l\}$ is cohesive. Since  $\calV$ is the only cohesive set in $\calV$, then there is only one cycle in $\mathcal{T}(W)$. Therefore, edges in $\mathcal{T}(W)$ form a directed cycle that traverses all the nodes in $\calV$.

\section{Proof of Lemma~\ref{lemma:prop-pis}}\label{app:proof-lemma:prop-pis}
For any $z\in[\alpha,\beta]$, we have 
\begin{align*}
    |z-\theta|\le \max\{|\alpha-\theta|,|\beta-\theta|\}\le |x_i-\theta|.
\end{align*}
That is, $C^i_{cog}(z;\theta)\le C^i_{cog}(x_i;\theta)$. Let $z=\lambda \alpha+(1-\lambda)\beta$, where $\lambda\in [0,1]$.
\begin{align*}
    \Csocial^i(z;x)&=\sum_{j\in \calV}w_{ij}|x_j-(\lambda \alpha+(1-\lambda)\beta)|\\ 
    &\le \sum_{j\in \calV}w_{ij}(\lambda |x_j-\alpha|+(1-\lambda)|x_j-\beta|)\\
    &=\lambda \Csocial^i(\alpha;x+(1-\lambda)\Csocial^i(\beta;x) \\
    &\le \Csocial^i(x_i;x).
\end{align*}
The last inequility holds since $\alpha,\,\beta \in P_i(x)$. By the definition of $P_i(x)$ and the arbitrariness of $z$, we conclude that $[\alpha,\beta]\cap \bO\subseteq P_i(x)$.  

\section{Proof of Lemma~\ref{lem:ex-wt-trv}}\label{app:proof-lem:ex-wt-trv}
Denote by $S$ the update sequence along which the system reaches consensus on $0$:
    \begin{align*}    
    S=\Big(  (i_1,z_1),(i_2,z_2),...,(i_T,z_T)\Big),
    \end{align*}
where $(i_t,z_t)$ means node $i_t$'s opinion is updated to $z_t\in \bO^n$ at time $t$, and $T$ is the time step when the system reaches consensus on $0$. Let $x(t)$ be the opinion trajectory under $S$ and thereby $x_{i_t}(t)=z_t$. 

Now we construct a new update sequence $S'$ from $S$:
\begin{align*}
    S'=\Big( (i_1,z_1'),(i_2,z_2'),\dots, (i_T,z_T') \Big).
\end{align*}
That is, the node updated at each time step is the same as in $S$ but their opinions could be updated to other feasible values. Let $y(t)$ denote the trajectory generated along the update sequence $S'$, starting also from $x(0)$. The new update sequence $S'$ and the trajectory $y(t)$ are defined recursively as follows: Let $y(0) = x(0)$. For each step $k \in  \{1, \ldots, T\}$, node $i_k$'s opinion is uodated to $z_k'$ given as follows
\begin{align*}
        z'_k = 
        \begin{cases}
            z_k & \text{ if }y_{i_k}(k-1) \cdot z_k > 0, \\
            0   & \text{ if }y_{i_k}(k-1) \cdot z_k \le 0,
        \end{cases}
\end{align*}
while the other nodes' opinions remain unchanged. That is, $i_k$'s opinion is updated to $z_k$ whenever such an update, along the trajectory $\{y(t)\}_{t=0}^T$, is not a crossing update. Otherwise, $i_k$ is updated to $\theta=0$. 

Denote by $t_0$ the first time step a crossing update occurs in $S$ ($1\le t_0\le T$). If $t_0$ does not exist, then the proof is already concluded since no crossing update occurs in $S$, along which the system reaches consensus on $\theta=0$. 

Suppose $t_0$ exists. Now we show by induction that $S'$ is a legal update sequence under the PID opinion dynamics. Obviously, $z_t'=z_t$ and $y(t)=x(t)$ for any $t\in \{0,\dots, t_0-1\}$. Therefore, the updates in $S'$ prior to $t_0$ (not including $t_0$) are legal. The update at time step $t_0$ is a crossing update in $S$, i.e., 
\begin{align*}
   z_{t_0}y_{i_{t_0}}(t_0-1) =  z_{t_0}x_{i_{t_0}}(t_0-1)<0
\end{align*}
Without loss of generality, assume $z_{t_0}<0<x_{i_{t_0}}(t_0-1)$. Since 
\begin{align*}
    z_{t_0} & \in P_{i_{t_0}}\big(y(t_0)\big),\text{ and}\\
    x_{i_{t_0}}(t_0-1) & =y_{i_{t_0}}(t_0-1)\in P_{i_{t_0}}\big(y(t_0)\big),
\end{align*}
by Lemma~\ref{lemma:prop-pis}, we have $0\in P_{i_{t_0}}(y(t_0-1))$ and hence $(i_{t_0},z_{t_0}')$ in $S'$ is a legal update along the trajectory $\{y(t)\}_{t=0}^T$.

Suppose $ k\ge t_0$ and $S'$ is legal for all $t\le k$. Define
\begin{align*}
   \calA_k(S)=\big{\{}i_t\,\big|\, 1\le t\le k, \,\, (i_t,z_t)\text{ is a cross update in }S\big{\}}. 
\end{align*}
By the definition of $S'$ and the induction hypothesis, we have that $\{y(t)\}_{t=1}^k$ is a feasible trajectory and
\begin{align}\label{eq:pf-lem-ex-wt-trv1}
        y_j(k)=
        \begin{cases}0, & \text{if }j\in \calA_k(S),
        \\x_j(k), & \text{if }j \in \V\setminus \calA_k(S).
        \end{cases}
\end{align}
Now we show that $(i_{k+1},z_{k+1}')$ in $S'$ is also a legal update along the trajectory $\{y(t)\}_{t=1}^{k+1}$. Let $i=i_{k+1}$. If $y_i(k)=0$, then, according to the construction of $S'$, $z_{t+1}'=0$, which must be a legal update. If $y_i(k)\neq 0$, then $y_i(k)=x_i(k)$. We split the discussion into two cases. 
  
\textit{Case 1:}  The ($k+1$)-th step in $S$ is a crossing update. Now we show $z_{k+1}'=0 \in P_i(y(k))$. Since both $x_i(k)$ and $z_{k+1}$ are in $P_i(x(k))$, and since they are of opposite signs, by Lemma~\ref{lemma:prop-pis}, we have $0\in P_i(x(k))$. As a result, we have $|z_{k+1}-0|\le |x_i(k)-0|$, and \begin{align}\label{eq:pf-lem-ex-wt-trv2}
     \sum_{j\in \calV}w_{ij}|0-x_j(k)|\le \sum_{j\in \calV}w_{ij}|x_i(k)-x_j(k)|.
\end{align}  
From $y_i(k)=x_i(k)$ and ~\eqref{eq:pf-lem-ex-wt-trv1}, we have
\begin{align}\label{eq:pf-lem-ex-wt-trv3}
\Csocial^i(y_i(k);y(k))&=\sum_{j\in \calV}w_{ij}|y_j(k)-x_i(k)|| \notag \notag \\ \notag &=\sum_{j\in \calA_k(S)}w_{ij}|x_i(k)|\\&\,\,\,\,\,\,\,+\sum_{j\in \V\setminus\calA_k(S)}w_{ij}|x_i(k)-x_j(k)|.
\end{align}
Due to the triangle inequality,
$$\sum_{j\in \calA_k(S)}w_{ij}|x_i(k)|\ge \sum_{j\in \calA_k(S)}w_{ij}\Big(|x_i(k)-x_j(k)|-|x_j(k)|\Big).$$ Substitute this into~\eqref{eq:pf-lem-ex-wt-trv3} and according to~\eqref{eq:pf-lem-ex-wt-trv2}, we have
\begin{align*}
    \Csocial^i(y_i(k);y(k))&\ge \sum_{j\in\calV}w_{ij}|x_i(k)-x_j(k)|
    \\&\,\,\,\,\,\,\,\,-\sum_{j\in \calA_k(S)}w_{ij}|x_j(k)|\\&\ge \sum_{j\in\calV}w_{ij}|x_j(k)|-\sum_{j\in \calA_k(S)}w_{ij}|x_j(k)|\\&=C^i_{social}(0;y(k)).
\end{align*}
Combining this inequality with $\Ccog^i(0)< C^i_{cog}(y_i(k))$, we have $0\in P_i(y(k))$. This demonstrates that updating $i$ to $z_{k+1}'=0$ at time step $k+1$ is legal for $S'$.

\textit{Case 2:} The ($k+1$)-th step in $S$ is not a crossing update. Without loss of generality, assume that $x_i(k)<z_{k+1}<0$. Then we have $x_i(k)=y_i(k)<z_{k+1}'=z_{k+1}<0$. 
Since $(i,z_{k+1})$ is a legal update in $S$ along the trajectory $\{x(t)\}_{t=0}^{k+1}$, we have 
\begin{align*}
    \Ccog^i(z_{k+1}') & \le C^i_{cog}(y_i(k)),\text{ and}\\
    \Csocial^i(z_{k+1};x(k))&\le\Csocial^i(x_i(k);x(k)).
\end{align*}
The latter inequality is rewritten as
\begin{align}\label{eq:pf-lem-ex-wt-trv4}
    \sum_{j\in \calV}w_{ij}|x_j(k)-x_i(k)|\ge\sum_{j\in \calV}w_{ij}|x_j(k)-z_{k+1}|.
\end{align}
From~\eqref{eq:pf-lem-ex-wt-trv1} we have
\begin{equation}\label{eq:pf-lem-ex-wt-trv5}
\begin{aligned}
\Csocial^i(y_i(k);y(k))=&\sum_{j\in \calA_k(S)}w_{ij}|x_i(k)|\\&+\sum_{j\in \V\setminus \calA_k(S)}w_{ij}|x_i(k)-x_j(k)|.
\end{aligned}
\end{equation}
Substituting the following triangle inequality 
\begin{align*}
   |x_i(k)|-|z_{k+1}|&=|x_i(k)-z_{k+1}|\\&\ge|x_i(k)-x_j(k)|-|z_{k+1}-x_j(k)|
\end{align*}
into equation~\eqref{eq:pf-lem-ex-wt-trv5}, we have
\begin{align*}
    &\,\,\,\,\,\,\,\Csocial^i(y_i(k);y(k))\\& \ge \sum_{j\in \calA_k(S)}w_{ij}(|x_i(k)-x_j(k)|-|z_{k+1}-x_j(k)|+|z_{k+1}|)\\&\,\,\,\,\,\,\,\,+\sum_{j\in \V\setminus \calA_k(S)}w_{ij}|x_i(k)-x_j(k)|\\&=\sum_{j\in\calV}w_{ij}|x_j(k)-x_i(k)|+\sum_{j\in \calA_k(S)}w_{ij}|z_{k+1}|\\&\,\,\,\,\,\,\,\,-\sum_{j\in\calA_k(S)}w_{ij}|z_{k+1}-x_j(k)|.
\end{align*}
In turn, substituting~\eqref{eq:pf-lem-ex-wt-trv4} into the inequality above, we obtain
\begin{align*}
\Csocial^i(y_i(k);y(k))&\ge\sum_{j\in\calA_k(S)}w_{ij}|z_{k+1}|\\
&\quad+\sum_{j\in \V\setminus \calA_k(S)}w_{ij}|x_j(k)-z_{k+1}|\\&=\Csocial^i(z_{k+1}';y(k)).
\end{align*}
Combining the above inequality with $\Ccog^i(z_{k+1}')\le C^i_{cog}(y_i(k))$, we have $z_{k+1}'\in P_i(y(k))$. 

Discussions on Case 1 and Case 2 combined together demonstrate that $(i,z_{k+1}')$ in $S'$ is indeed a legal update along the trajectory $\{y(t)\}_{t=1}^{k+1}$. By induction, we conclude that 
$S'$ is a legal update sequence along the trajectory $\{y(t)\}_{t=0}^T$, with no crossing update, and equation~\eqref{eq:pf-lem-ex-wt-trv1} holds at time $t=T$. Since $x(T)=\vect{0}_n$, we have $y(T)=\vect{0}_n$, which concludes the proof.

\section{Proof of Lemma~\ref{lem:ex-pm1}}\label{app:proof-lem:ex-pm1}
The proof follows a similar approach to the proof of Lemma~\ref{lem:ex-wt-trv}: We construct a new update sequence $S'$ from a known update sequence $S$ that leads to consensus on $0$. Denote by $S=\Big( (i_1,z_1),\dots,(i_T,z_T) \Big)$ an update sequence with no crossing update, along which the system reaches consensus on $0$. The corresponding trajectory is $\{x(t)\}_{t=1}^T$. Since nodes expresing the opinion $\theta=0$ will never change their expressed opinion, we assume that, in sequence $S$, such nodes are never chosen to update their opinions. 

We now define a new update sequence $S'=\Big( (i_1,z'_1),\ldots,(i_T,z'_T) \Big)$ and the corresponding trajectory $y(t)$ starting from the initial condition $y(0) = x(0)$. For any $k \in \{ 1, \ldots, T\}$, let
\begin{align*}
    y_{i_k}(k)=z'_k = 
    \begin{cases}
        -1 & \text{if } z_k = 0 \text{ and } y_{i_k}(k-1) < 0, \\
        1  & \text{if } z_k = 0 \text{ and } y_{i_k}(k-1) > 0, \\
        z_k & \text{if } z_k \neq 0,
    \end{cases}
\end{align*}
and let $y_j(k)=y_j(k-1)$ for any $j\in \V\setminus \{i_k\}$.
    
Now we show by induction that $S'$ is legal under PID opinion dynamics. Denote by $t_0\ge1$ the first time in $S$ when a node updates its opinion to $0$. Obviously, the updates prior to $t_0$ (not including $t_0$) are identical between $S$ and $S'$, and hence legal. 

For the $t_0$-th step in $S$, without loss of generality, assume $x_{i_{t_0}}(t_0-1)\le -1$. The legality of $S$ implies $0\in P_{i_{t_0}}(x(t_0-1))$. By Lemma~\ref{lemma:prop-pis}, we have $-1\in P_{i_{t_0}}(x(t_0-1))$ and hence the update at $t_0$ in $S'$ is also legal. 

Assume $S'$ is legal for all $t\le k$, where $k\ge t_0$. Let 
\begin{align*}
    \calA_k(S) & =\{i\in \calV\,| x_i(k)\neq 0 \,\text{ or }\,x_i(0)=0\},\\
    \calB_k(S) & =\{i\in\calV\,|x_i(0)<0 \, \text{ and } \,x_i(k)=0\},\\
    \calC_k(S) & =\{i\in \calV\,|x_i(0)>0\,\text{ and }\, x_i(k)=0\}.
\end{align*}
By the definition of $S'$, we have \begin{align}\label{eq:pf-lem-ex-pm11}
        y_j(k)=\begin{cases} x_j(k), &\text{ if }j\in \calA_k(S),\\
        -1,  &\text{ if }j\in \calB_k(S),\\
        1, &\text{ if }j\in \calC_k(S).
        \end{cases}
    \end{align}

Let $i=i_{k+1}$. By symmetry, we assume $x_i(k)<0$ and hence $y_i(k)=x_i(k)<0$. Since there is no crossing update in $S$, we have $z_{k+1}\le0$. Now we show $S'$ is also legal at step $k+1$ by considering the two cases: $z_{k+1}=0$ and $z_{k+1}<0$.
   
\textit{Case 1:}  $z_{k+1}=0$ and thus $z_{k+1}'=-1$. Since $x_i(k)\in P_i(x(k))$, $x_i(k)\le -1$, and $z_{k+1}=0\in P_i(x(k))$, by Lemma~\ref{lemma:prop-pis}, we have $-1\in P_i(x(k))$, which implies
        \begin{align}\label{eq:pf-lem-ex-pm12}
            &\,\,\,\,\,\,\,\,\sum_{j\in \calA_k(S)}w_{ij}|x_j(k)-x_i(k)|+\sum_{j\in \calB_k(S) \cup \calC_k(S)} w_{ij}|x_i(k)|\notag\\&\ge \sum _{j\in \calA_k(S)}w_{ij}|x_j(k)+1|+\sum_{j\in \calB_k(S) \cup \calC_k(S)}w_{ij}.
        \end{align}
Substituting~\eqref{eq:pf-lem-ex-pm11} into~\eqref{eq:social}, we have
        \begin{align*}
            &\,\,\,\,\,\,\,\,\Csocial^i(y_i(k);y(k))=\sum _{j\in \calV}w_{ij}|y_i(k)-y_j(k)|\\&=\sum_{j\in \calA_k(S)}w_{ij}|x_i(k)-x_j(k)|-\sum_{j\in \calB_k(S)}w_{ij}(x_i(k)+1)\\&\,\,\,\,\,\,\,+\sum_{j\in\calC_k(S)}w_{ij}(1-x_i(k)).
        \end{align*}
Substituting~\eqref{eq:pf-lem-ex-pm12} into the above equality, we have
        \begin{align*}
            \Csocial^i(y_i(k);y(k))&\ge \sum_{j\in \calA_k(S)}w_{ij}|x_j(k)+1|+2\sum_{j\in \calC_k(S)}w_{ij}\\&=\sum_{j\in \calV}w_{ij}|y_j(k)+1|\\&=\Csocial^i(z_{k+1}';y(k)).
        \end{align*}
Combining the above inequality with the fact that $\Ccog^i(z_{k+1}')\le C^i_{cog}(y_i(k))$, we have $z_{k+1}'\in P_i(y(k))$.
        
\textit{Case 2:}  $z_{k+1}=z_{k+1}'<0$. Since $S$ is legal, we have
        \begin{align}\label{eq:pf-lem-ex-pm13}
            &\quad\,\,\sum_{j\in\calA_k(S)}w_{ij}|x_j(k)-x_i(k)|-\sum_{j\in \calB_k(S) \cup \calC_k(S)}w_{ij}x_i(k)\notag\\
            &\ge \sum_{j\in\calA_k(S)}w_{ij}|x_j(k)-z_{k+1}|-\sum_{j\in \calB_k(S) \cup \calC_k(S)}w_{ij}z_{k+1}
        \end{align}
Substituting~\eqref{eq:pf-lem-ex-pm11} into~\eqref{eq:social}, we obtain
    \begin{align*}
        &\Csocial^i(y_i(k);y(k))\\
        &=\sum _{j\in \calV}w_{ij}|y_i(k)-y_j(k)|\\
        &=\sum_{j\in \calA_k(S)}w_{ij}|x_i(k)-x_j(k)|+\sum_{j\in\calB_k(S)}w_{ij}(-1-x_i(k))\\
        &\quad+\sum_{j\in \calC_k(S)}w_{ij}(1-x_i(k)).
    \end{align*}
Substituting~\eqref{eq:pf-lem-ex-pm13} into the above equality, we have
    \begin{align*}
        &\Csocial^i(y_i(k);y(k))\\
        &\,\,\,\, \ge \sum_{j\in \calA_k(S)}w_{ij}|x_j(k)-z_{k+1}|-\sum_{j\in \calB_k(S)}w_{ij}(1+z_{k+1}) \\
        &\quad\,\,\,\,+\sum_{j\in \calC_k(S)}w_{ij}(1-z_{k+1})\\
        &\,\,\,\,=\Csocial^i(z_{k+1}';y(k)).
    \end{align*}
Combining this inequality with the fact that $\Ccog^i(z_{k+1}')\le C^i_{cog}(y_i(k))$, we can conclude that $z_{k+1}'\in P_i(y(k))$.

We have shown that $S'$ is a legal update sequence without crossing updates and that~\eqref{eq:pf-lem-ex-pm11} holds for $k=T$. That is, every $i\in \calV_{<0}(x(0))$ expresses $-1$ and every $i\in \calV_{>0}(x(0))$ expresses $1$ at time $T$ along $S'$. This concludes the proof.   

\section{Proof of Lemma~\ref{lem:equi-pm10}}\label{app:proof-lem:equi-pm10}

We split the proof into two cases.

\emph{Case 1:} there exists a non-empty strictly cohesive set $\calA \subseteq \calV_{=-1}(x(0))$. We now prove that, regardless of the update sequence, nodes in $\calA$ will never alter their states. Otherwise, let $i\in \calA$ be the first one to alter its state, and this update occurs at some time $t+1$. However, by calculating $\Csocial^i(z;x(t))$, we could find that 
\begin{align*}
    &\,\,\,\,\,\,\,\,\Csocial^i(-1;x(t))-\Csocial^i(0;x(t))\\&=\sum_{j\in \calV_{=0}(x(t))}w_{ij}+\sum_{j\in \calV_{=1}(x(t))}w_{ij}-\sum_{j\in \calV_{=-1}(x(t))}w_{ij}\\&=1-2\sum_{j\in \calV_{=-1}(x(t))}w_{ij}\le1-2\sum_{j\in \calA}w_{ij}<0
\end{align*}
and
\begin{align*}
    &\,\,\,\,\,\,\,\Csocial^i(-1;x(t))-\Csocial^i(1;x(t))\\&=2\sum_{j\in \calV_{=1}(x(t))}w_{ij}-2\sum_{j\in \calV_{=-1}(x(t))}w_{ij}\\&\le2-4\sum_{j\in \calA}w_{ij}<0.
\end{align*}
This implies that neither $0$ nor $1$ is in $P_i(x(t))$, i.e., $i$ cannot update its opinion to $0$ or $1$, which is a contradiction. Therefore, in this case, all nodes in $\calA$ stay in $-1$ in all equilibria of the system.    

\emph{Case 2:} $\calV_{=-1}(x(0))$ does not contain any non-empty strictly cohesive set. We show that there is an update sequence along which the system achieves consensus at $1$. Consider the cohesive expansion of $\calV_{=1}(x(0))$. Since $\calV_{=-1}(x(0))\cup \calV_{=1}(x(0))=\V$, if $\E(\calV_{=1}(x(0)))\neq\calV$, then there exists a non-empty node set $\calA\subseteq\calV_{=-1}(x(0))$ such that, for any $i\in \calA$, $ \sum_{j\in\calA }w_{ij}>\frac{1}{2}$, which contradicts the pre-assumption that $\calV_{=-1}(x(0))$ contains no non-empty strictly cohesive set. Therefore, we have $\E\big(\calV_{=1}(x(0))\big)=\calV$. Let $(i_1,\dots, i_T)$ be a node-addition sequence for the cohesive expansion $\E\big(\calV_{=1}(x(0))\big)$. Based on the Pareto-improvement mechanism, one could easily check that, if $\calV_{=0}(x(0))$ is empty, such a node addition sequence in fact induces a legal opinion-update sequence 
\begin{align*}
    S=\Big(  (i_1,1),\,(i_2,1)\dots,(i_T,1)\Big)
\end{align*}
of the PID opinion dynamics. Moroever, $\E\big(\calV_{=1}(x(0))\big)\!=\!\calV$ implies that $x_i(T)=i$ for any $i\in \calV$ when the opinion-update sequence $S$ terminates at time step $T$. That is, there exists a finite-time update sequence driving the system to a falso consensus. This together with the arguments in Case 1 concludes the proof. 


}
\end{appendices}

\bibliographystyle{IEEEtran} 
\bibliography{ref}

\begin{IEEEbiography}[{\includegraphics[width=1in,height=1.25in,clip,keepaspectratio]{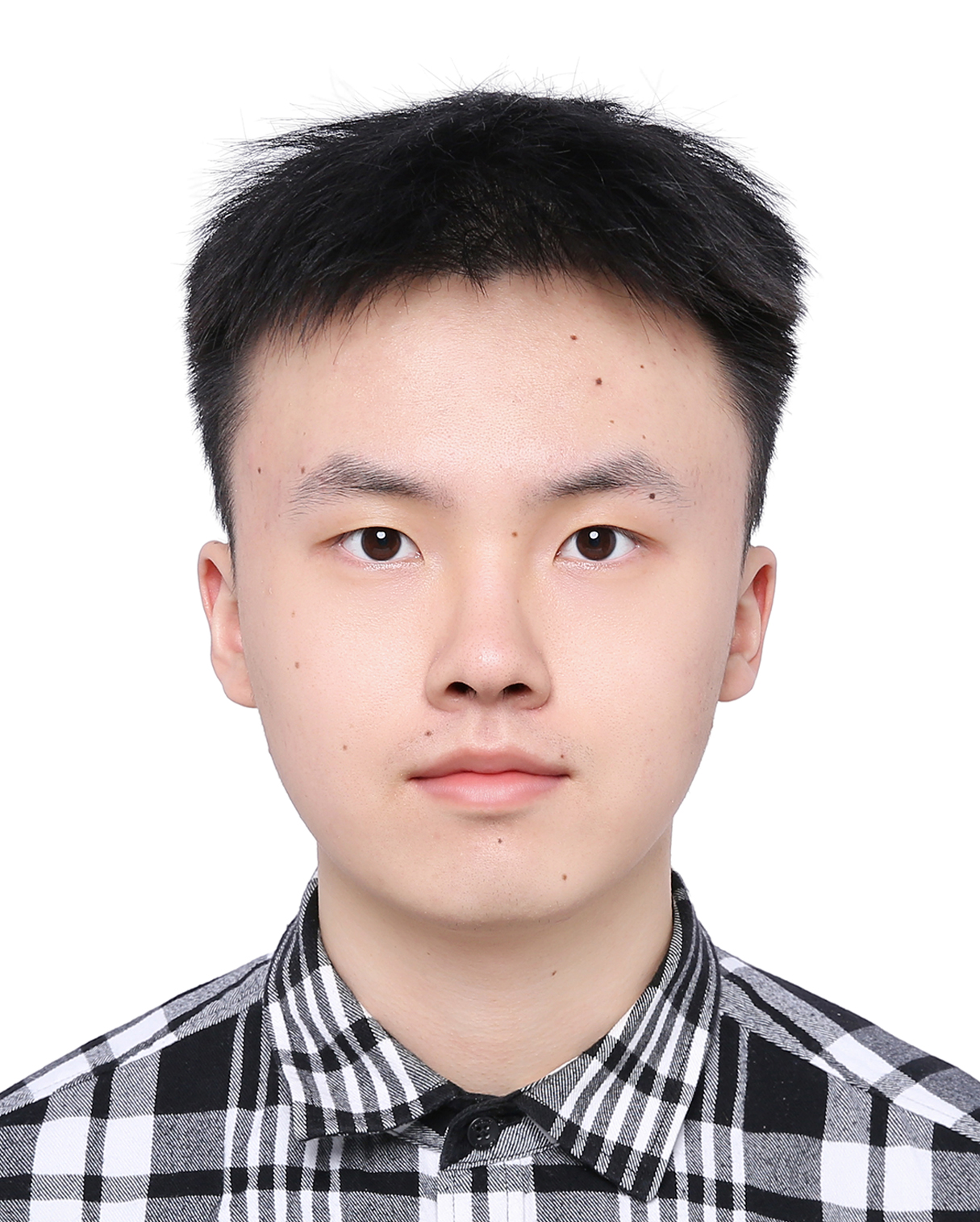}}]{Yuheng Luo} is a senior undergraduate student at the College of Engineering, Peking University. He was awarded the National Scholarship for Undergraduate Students by the Ministry of Education of China (top 0.2\% nationwide). In September 2026, he will begin his Ph.D. studies in the Department of Control Science and Systems Engineering at Peking University. His current research interests focus on the modeling and analysis of opinion evolution over social networks.
\end{IEEEbiography}

\begin{IEEEbiography}[{\includegraphics[width=1in,height=1.25in,clip,keepaspectratio]{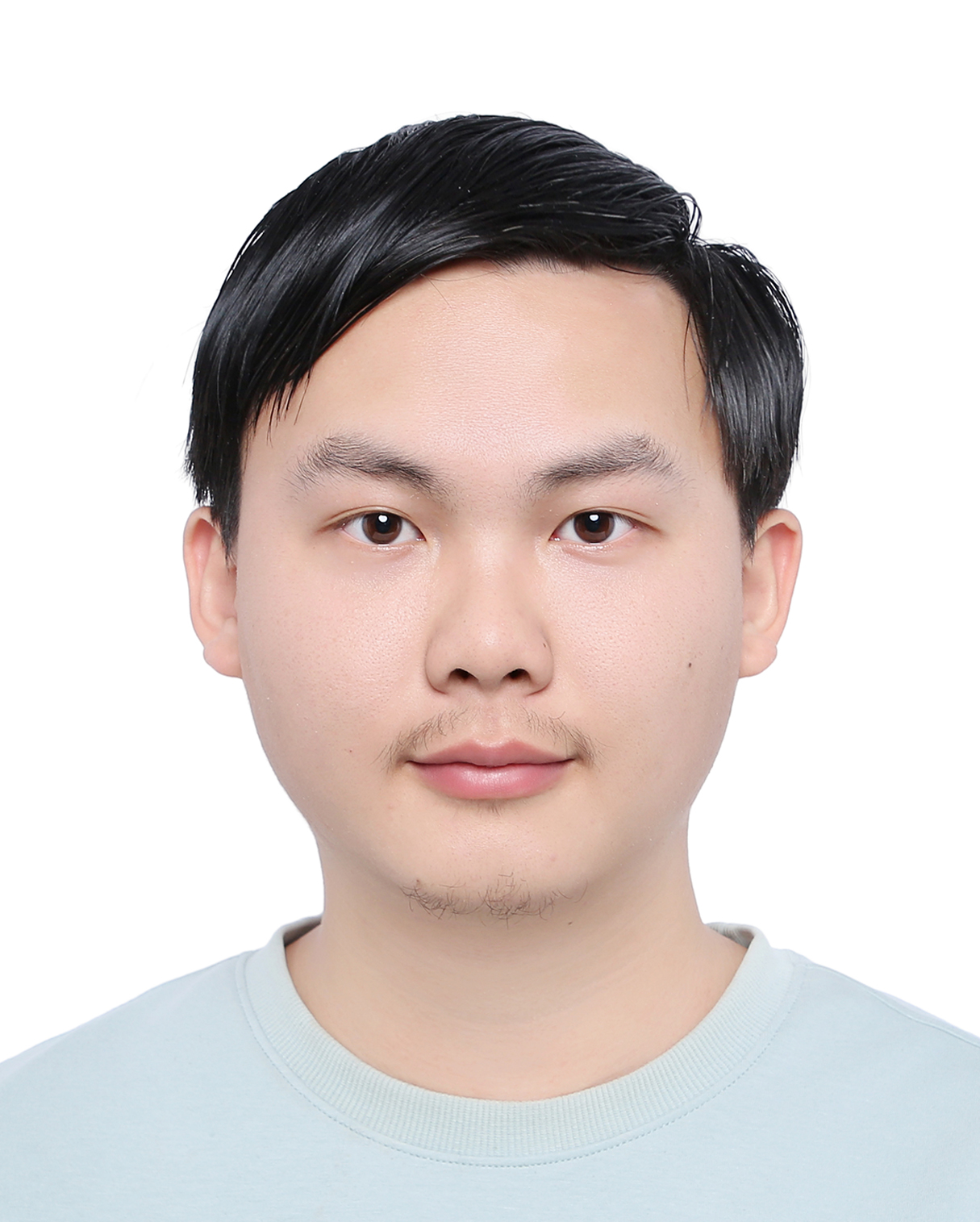}}]{Chuanzhe Zhang} is a Ph.D. student in the Department of Control Science and Systems Engineering at Peking University. He received his B.S. degree in Theoretical and Applied Mechanics from Peking University in 2023. His current research focuses on the modeling and analysis of international relations as games on signed networks.
\end{IEEEbiography}

\begin{IEEEbiography}[{\includegraphics[width=1in,height=1.25in,clip,keepaspectratio]{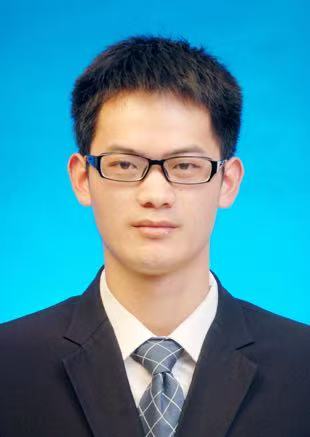}}]{Qingsong Liu} received the Ph.D. degree from the Department of Control Science and Engineering at the Harbin
Institute of Technology, Harbin, China in 2019. Currently, he is an Associate Professor at the School of Artificial Intelligence and Automation, Wuhan University of Science and Technology, Wuhan, China. His research interests include social networks, opinion dynamics, multi-agent systems, and time-delay systems.
\end{IEEEbiography}

\begin{IEEEbiography}[{\includegraphics[width=1in,height=1.25in,clip,keepaspectratio]{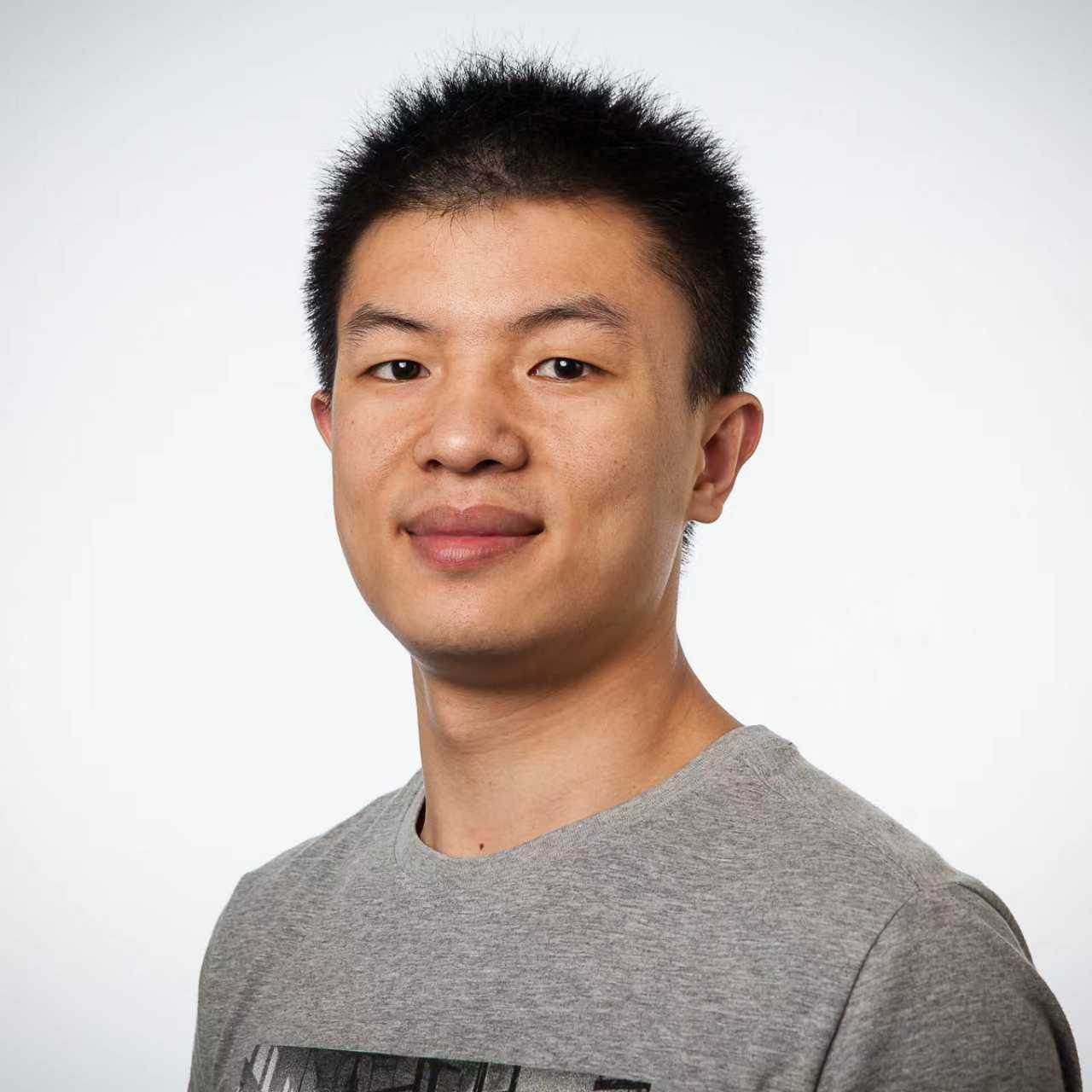}}]{Hai Zhu} received the Ph.D. degree in robotics from Delft University of Technology, Delft, The Netherlands in 2022. He is currently an Assistant Professor with the Intelligent Game and Decision Laboratory, Beijing, China. Dr. Hai Zhu was the recipient of multiple awards, including the ICUS Best Paper Award (2024), the ISAS Best Paper Award (2023), and the ICRA Best Paper Award on Multi-Robot Systems (2019). His research interests include learning and control of multi-agent systems, with an emphasis on autonomous mobile robots and teams.
\end{IEEEbiography}

\begin{IEEEbiography}[{\includegraphics[width=1in,height=1.25in,clip,keepaspectratio]{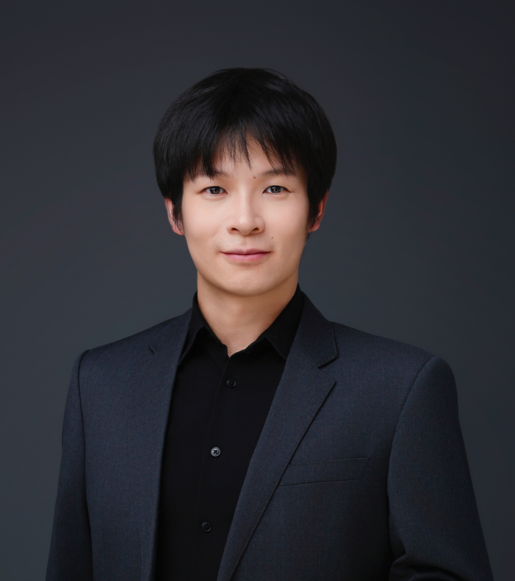}}]{Wenjun Mei} is an assistant professor at the Department of Mechanics and Engineering Science, Peking University. Before joining Peking University, he was a postdoctoral researcher in the Automatic Control Laboratory at ETH, Zurich. He received the Bachelor of Science degree in Theoretical and Applied Mechanics from Peking University and the Ph.D. degree in Mechanical Engineering from University of California, Santa Barbara. He is on the editorial board of the \emph{Journal of Mathematical Sociology}. Wenjun Mei’s research mainly focuses on the modeling and analysis of complex network systems including opinion dynamics and game-theoretic systems.
\end{IEEEbiography}

\end{document}